\documentclass[onecolumn,journal,doublespace]{IEEEtran}
\usepackage{indentfirst}
\usepackage{xcolor,color}
\usepackage{amsfonts,amsmath,amssymb,amsthm}
\usepackage{graphicx,multirow,bm}
\usepackage{cite}
\usepackage{arydshln}

\newtheorem{Theorem}{Theorem}
\newtheorem{Example}{Example}
\newtheorem{Remark}{Remark}
\newtheorem{Lemma}{Lemma}
\newtheorem{Construction}{Construction}
\newtheorem{Proposition}{Proposition}
\begin{document}
\title{New Minimum Storage Regenerating Codes With Small Sub-packetization Levels for All Admissible Repair Degrees
}

\author{Jie~Li,~\IEEEmembership{Senior~Member,~IEEE,} 
Xiaohu~Tang,~\IEEEmembership{Fellow,~IEEE,}
Linqi~Song,~\IEEEmembership{Senior~Member,~IEEE}
\thanks{J. Li and L. Song are with the City University of Hong Kong, Hong Kong SAR, China (e-mails: jieli873@gmail.com; 	linqi.song@cityu.edu.hk).}
\thanks{X. Tang is with the Information Coding and Transmission Key Lab of Sichuan Province, CSNMT Int. Coop. Res. Centre (MoST), Southwest Jiaotong University, Chengdu, 610031, China (e-mail: xhutang@swjtu.edu.cn).}
}

\maketitle

\begin{abstract}
In this paper, {we propose a new family of  $(n, k)$ Minimum Storage Regenerating (MSR) codes} for distributed storage systems. By structuring the parity-check matrix as a block Vandermonde matrix composed of diagonal matrices and generalized permutation matrices, the proposed code achieves a sub-packetization level of $N = (d-k+1)^{\lceil\frac{n}{2}\rceil}$ and supports any repair degree $k+1\le d<n$. A key innovation of the new construction lies in relaxing the strict commutativity condition previously required for the building blocks in the parity-check matrices of Ye-Barg codes. We introduce a novel proof technique to establish the MDS property without relying on full commutativity, thereby generalizing the applicability of block Vandermonde matrices and enriching the family of Ye-Barg codes. Additionally, the sparsity of the parity-check matrix enhances encoding and decoding efficiency by reducing computational complexity. This work provides new insights into the construction of MSR codes.
\end{abstract}

\begin{IEEEkeywords}
Maximum distance separable, minimum storage regenerating codes, repair bandwidth,  sub-packetization.
\end{IEEEkeywords}

\section{Introduction}
In the rapidly evolving field of distributed storage systems, the need for efficient data management strategies has become increasingly critical. With vast amounts of data generated daily, ensuring reliability and performance has taken center stage. Distributed storage systems employing Maximum Distance Separable (MDS) codes achieve an optimal trade-off between fault tolerance and storage overhead, making them widely used over the past few decades. 

Specifically, 
an $(n,k)$ MDS code over a finite field $\mathbf{F}_q$ can be deployed in distributed storage systems as follows: A file consisting of $k$ symbols in $\mathbf{F}_q$ is encoded using an $(n,k)$ MDS code, resulting in a codeword of $n$ symbols, which are then distributed across $n$ storage nodes. This setup allows the original file to be recovered from any $k$ out of the $n$ nodes. However, when repairing a failed node, one must contact $k$ surviving nodes and download $k$ symbols from $\mathbf{F}_q$ (assuming no subsymbols from a subfield of $\mathbf{F}_q$ are considered). This leads to excessive \textit{repair bandwidth}, defined as the amount of data downloaded to regenerate a failed node.

In the seminal work of Dimakis et al. \cite{dimakis2010network}, Minimum Storage Regenerating (MSR) codes were introduced as a class of MDS array codes, where a codeword is represented as an array of size 
$N\times n$, with $N$ referred to as the \textit{sub-packetization level}.
MSR codes have emerged as a transformative approach to mitigating the excessive repair bandwidth associated with MDS codes, optimizing data communication during node repair while minimizing storage overhead.
In \cite{dimakis2010network}, it was proven that the repair bandwidth of $(n,k)$ MSR codes with sub-packetization level $N$ is lower bounded by
\begin{equation}\label{Eqn_gamma_opt}
\gamma\ge \gamma_{\rm optimal}=\frac{d}{d-k+1}N,
\end{equation}
where $d$ ($k\le d<n$) denotes the number of helper nodes contacted during the repair process, known as the \textit{repair degree}. 

MSR codes significantly reduce the repair bandwidth compared to conventional MDS codes. The construction of MSR codes has attracted considerable attention since 2007, leading to substantial progress over the past decade \cite{Zigzag,li2023MDS,liu2023generic,wang2012long,hadamard,Hadamard strategy,invariant subspace,Long IT,Goparaju,tian2014characterizing,Barg1,Barg2,Sasidharan-Kumar2,Rashmi2011optimal,han2015update,li2017generic,li2018generic,hou2020binary,elyasi2020cascade,li2022generic,liu2022generic,vajha2023small,wang2023constructing,zhang2023vertical,chen2020explicit,li2024msr}. 
In \cite{Rashmi2011optimal}, MSR codes for general parameters were constructed for low code rates. In \cite{Zigzag, wang2012long,hadamard, Hadamard strategy, invariant subspace, Long IT}, high-rate MSR codes were obtained, although they are explicitly applicable only to two or three parity nodes. Significant advances in high-rate MSR codes with general parameters were made only in 2016. In \cite{li2017generic,Sasidharan-Kumar2,li2018generic,Barg2}, $(n ,k)$ MSR codes over non-binary fields for any $n$ and $k$ were independently proposed with a sub-packetization level of $(n-k)^{\frac{n}{n-k}}$ and $d=n-1$. Subsequently, 
$(n ,k)$ MSR codes over binary fields for any $n$ and $k$ were also independently introduced in \cite{hou2020binary} and  \cite{li2022generic}.

Later, for $(n, k)$ MSR codes with repair degree $d<n-1$, in their seminal work \cite{Barg1}, Ye and Barg introduced two families of $(n, k)$ MSR codes by constructing the parity-check matrix as a block Vandermonde matrix, using diagonal matrices and generalized permutation matrices as building blocks, respectively. In \cite{chen2020explicit}, Chen and Barg presented an $(n, k)$ MSR code with a sub-packetization level of $(d-k+1)^n$ using upper triangular matrices as the components of the parity-check matrix. In \cite{liu2022generic}, a generic transformation to construct $(n, k)$ MDS codes with repair degree $d\le n-1$ was proposed, achieving a sub-packetization level of $(d-k+1)^{\lceil \frac{n}{2}\rceil}$. In \cite{vajha2023small}, optimal-access MSR codes were proposed, which only support repair degrees of $d=k+1, k+2, k+3$. Subsequently, in independent works \cite{zhang2023vertical} and \cite{li2023MDS}, $(n, k)$ MSR codes with a sub-packetization level of $(d-k+1)^{\lceil\frac{n}{2}\rceil}$ were proposed using combinations of upper triangular and diagonal matrices. 
In \cite{wang2023constructing}, $(n, k)$ MSR codes with a small sub-packetization level of $2^{n/3}$ were proposed, but the repair degree was fixed to $d=k+1$. This was later generalized in \cite{li2024msr} to support any repair degree $d<n-1$, albeit requiring relatively large finite fields for general code parameters. 

Notably, the Ye–Barg codes constitute two broad families of MSR codes, where their parity-check matrices are composed of commuting building blocks. The MDS property of the Ye–Barg code in \cite[Section VI]{Barg1} relies on this commutativity property together with the block Vandermonde structure of the parity-check matrix.
The Ye–Barg codes and the block Vandermonde structure with commuting building blocks have inspired numerous subsequent constructions, such as those in \cite{chen2020explicit, eMSR_d_eq_n-1, ye2020new, liu2023new, li2024constructions, wang2023rack}.
For example, the parity-check matrices of the binary MDS array codes in \cite{li2024constructions} also exhibit the block Vandermonde structure and are composed of commuting building blocks similar to those of the Ye–Barg codes, and the parity-check matrix of the rack-aware MSR codes in \cite{wang2023rack} shares a similar structure with the Ye–Barg code in \cite[Section VI]{Barg1}.

In this paper, we propose a new MSR code whose parity-check matrix is also structured as a block Vandermonde matrix. Specifically, the parity-check matrix is formulated by combining diagonal matrices and generalized permutation matrices, yielding an $(n, k)$ MSR code with a sub-packetization level of $N=(d-k+1)^{\lceil\frac{n}{2}\rceil}$. The proposed code can be regarded as a concatenated version of the Ye–Barg codes introduced in \cite{Barg1}, yet it achieves a smaller sub-packetization level than the original Ye–Barg codes, while requiring a finite field size $q$ satisfying $q>\lceil\frac{n}{2}\rceil w$ with $\gcd(w, q-1)=1$.

Unlike the Ye–Barg constructions, the building blocks of the parity-check matrix in the new MSR code are not always commutative. This is the first time the strict commutativity condition for block Vandermonde matrices, as required in \cite[Lemma 14]{Barg1}, has been relaxed while still preserving the MDS property. We introduce a novel proof technique, previously unexplored in the literature, to establish the MDS property of the proposed code without relying on the strict commutativity condition. This result suggests that Lemma 14 in \cite{Barg1} may be generalized to guarantee the nonsingularity of block Vandermonde matrices even in certain non-commutative settings.

The proposed MSR code broadens the Ye–Barg code families and offers new insights into MSR code design. It could potentially serve as a building block for constructing new MDS array codes with small repair bandwidth, rack-aware MSR codes, and binary MSR codes, similar to those in \cite{chen2020explicit, eMSR_d_eq_n-1, ye2020new, liu2023new, li2024constructions, wang2023rack} that employ the Ye–Barg codes as building blocks. Furthermore, the parity-check matrix of the new code is sparse, which substantially enhances encoding and decoding efficiency by reducing the number of finite field multiplications required for matrix-vector computations.

The remainder of this paper is organized as follows. Section \ref{sec:pre} reviews some necessary preliminaries on MDS array codes. Section \ref{sec:construction} presents the construction of the MSR code with a small sub-packetization level and states the main results. The MDS property and the repair property are established in Sections \ref{sec:MDS} and \ref{sec:repair}, respectively. Section \ref{sec:comp} compares key parameters between the proposed MSR code and existing MSR codes, and Section \ref{sec:conclusion} concludes the paper.

\section{Preliminaries}\label{sec:pre}

In this section, we introduce some preliminaries on MDS array codes and a special partition of a given basis. Let $q$ be a prime power and $\mathbf{F}_q$ denote the finite field with $q$ elements. For two integers $a$ and $b$ with $a<b$, let $[a: b)$ represent the set $\{a, a+1, \ldots, b-1\}$. 

\subsection{$(n,k)$ Array Codes}

An $(n,k)$ array code can be defined by an $rN\times n N$ matrix $A$ of full rank over $\mathbf{F}_q$, where the codeword 
$$\begin{pmatrix}
\mathbf{f}_0,
\mathbf{f}_1,
\ldots,
\mathbf{f}_{n-1}   
\end{pmatrix}$$ satisfies
\begin{equation}\label{Eqn parity check eq}
A
\left(\begin{array}{c}
\mathbf{f}_0\\
\mathbf{f}_1\\
\vdots\\
\mathbf{f}_{n-1}
\end{array}
\right)=\mathbf{0}_{rN},
\end{equation}
with $r=n-k$, $\mathbf{f}_i$ is a column vector of length $N$ over $\mathbf{F}_q$, and $\mathbf{0}_{rN}$ denotes the zero column vector of length $rN$ (abbreviated as $\mathbf{0}$ when its length is clear from context). The $rN\times nN$ block matrix $A$ in \eqref{Eqn parity check eq} is called the \textit{parity-check matrix} of the code and can be represented as a block matrix
\begin{align*}
 A=&(A_{t,i})_{t\in [0: r),i\in[0: n)}
 \\=&\left(
\begin{array}{cccc}
A_{0,0} & A_{0,1} & \cdots & A_{0,n-1} \\
A_{1,0} & A_{1,1} & \cdots & A_{1,n-1} \\
\vdots & \vdots & \ddots & \vdots \\
A_{r-1,0} & A_{r-1,1} & \cdots & A_{r-1,n-1}
\end{array}\right),
\end{align*}
where each $A_{i,j}$ is a square matrix of order $N$. 
For each $t\in[0: r)$, the equation $\sum\limits_{i=0}^{n-1}A_{t,i}\mathbf{f}_i=0$ in \eqref{Eqn parity check eq} yields $N$ equations. We refer to $\sum\limits_{i=0}^{n-1}A_{t,i}\mathbf{f}_i=0$ as the $t$-th \textit{parity-check group}.

In this paper, we always assume 
\begin{equation}\label{Eqn_DRF}
A_{0,j}=I_N \quad \text{for } j\in [0: n)
\end{equation}
so that the MDS array codes possess the degraded-read-friendly property \cite{wu2021achievable}, where $I_N$ denotes the identity matrix of order $N$ (abbreviated as $I$ when the order is clear from context).

Previously, many constructions relied on a block Vandermonde structure of the parity-check matrix while requiring that all block matrices pairwise commute to preserve the MDS property \cite{eMSR_d_eq_n-1, li2024constructions, wang2023rack}. In this paper, we propose a new MSR code that relaxes this strict commutativity condition while still preserving the MDS property, as will be shown in the following sections.

Additionally, we introduce the concept of \textit{density}, defined as the fraction of nonzero entries in a parity-check matrix. It accounts for the number of finite field multiplications required for matrix-vector computations during encoding and decoding, thus serving as a measure of encoding/decoding efficiency. Formally, the density of an $(n,k)$ array code with sub-packetization level $N$ is defined as
\begin{equation*}
D = \frac{N_{\rm nonzero}}{r n N^2},
\end{equation*}
where $N_{\rm nonzero}$ denotes the number of nonzero elements in the parity-check matrix. Codes with small density are preferred.

\subsection{Repair Mechanism}\label{sec:repair_mech}

A repair mechanism for $(n,k)$ array codes with repair degree $d=n-1$ was formulated in \cite{eMSR_d_eq_n-1}. We revisit it in this subsection.

For an $(n,k)$ array code, suppose node $i$ ($i\in[0: n)$) fails. Similar to \cite{eMSR_d_eq_n-1}, we consider the following linear equations based on \eqref{Eqn parity check eq}:

\begin{equation}\label{Eqn repairment gene}
\underbrace{\left(\begin{array}{c}
S_{i} A_{0,i} \\
S_{i} A_{1,i}\\
\vdots\\
S_{i} A_{r-1,i}
\end{array}\right)\mathbf{f}_i}_{\mathrm{useful ~data}}+\sum\limits_{j=0,j\ne i}^{n-1}\underbrace{\left(\begin{array}{c}
S_{i} A_{0,j}\\
S_{i} A_{1,j}\\
\vdots\\
S_{i} A_{r-1,j}
\end{array}\right)\mathbf{f}_j}_{\mathrm{interference ~by~}\mathbf{f}_j}=0,
\end{equation}
where $S_{i}$ is an $\frac{N}{r}\times N$ full-rank matrix, referred to as the \textit{selection matrix}. 

According to \eqref{Eqn repairment gene}, node $i$ achieves optimal repair bandwidth if the following two requirements are satisfied:
\begin{itemize}
\item [R1.] The coefficient matrix of the system of linear equations (with respect to the variable $\mathbf{f}_i$) in \eqref{Eqn repairment gene} has rank $N$, i.e.,
\begin{equation}\label{repair_node_requirement1 d}
\operatorname{rank}\left(\begin{array}{c}
S_{i} A_{0,i}\\
S_{i} A_{1,i}\\
\vdots\\
S_{i} A_{r-1,i}
\end{array}\right)=N.
\end{equation}

\item [R2.] The interference caused by $\mathbf{f}_j$ can be resolved by downloading $N/r$ symbols (i.e., $S_{i}\mathbf{f}_j$) from node $j$ for $j\in [0: n)\backslash\{i\}$, i.e.,
\begin{equation*}
\operatorname{rank}\left(\begin{array}{c}
S_{i}A_{0,j}\\
S_{i}A_{1,j}\\
\vdots\\
S_{i}A_{r-1,j}
\end{array}\right)=N/r, \quad j\in [0: n)\backslash\{i\},
\end{equation*}
or equivalently,
\begin{equation*}
\operatorname{rank}\left(\begin{array}{c}
S_{i}\\
S_{i}A_{t,j}
\end{array}\right)=N/r, \quad t\in [1: r), \; j\in [0: n)\backslash\{i\},
\end{equation*}
by \eqref{Eqn_DRF}.
\end{itemize}

For a detailed characterization of repair mechanisms for $(n,k)$ array codes with repair degree $d<n-1$, interested readers may refer to \cite{li2023MDS}. Interestingly, in this work, the repair mechanism for the proposed $(n,k)$ array codes with repair degree $d<n-1$ can be reduced to that of a code with repair degree $d=n-1$.

{
\begin{Remark} In \cite{eMSR_d_eq_n-1}, the repair mechanism uses a double-subscript selection matrix 
$S_{i,t}$
  applied to the parity-check groups, where $i$ is the index of the failed node and $t$ is the parity-check group index. In our construction we use a single-subscript $S_i$, which selects parity-check equations from each pariy-check group in the same way for a given failed node $i$. This is not restrictive for the codes considered here. Since the extra freedom of $S_{i,j}$
  is unnecessary for the proposed code to achieve optimal repair bandwidth.
\end{Remark}

}

\subsection{Partition of the Basis $\{e_0,\dots,e_{w^m-1}\}$}
Let $w$ and $m$ be integers greater than or equal to $2$. Denote by $e_0,\dots,e_{w^m-1}$ a basis of $\mathbf{F}_q^{w^m}$.
For simplicity, we regard them as the standard basis, i.e.,
\begin{align*}
e_i=(0,\dots,0,1,0,\dots,0), \quad i\in [0: w^m),
\end{align*}
where only the $i$-th entry is nonzero. Throughout this paper,
let $N=w^m$ unless otherwise stated. 
In the following, we revisit a class of special partition sets of $\{e_0,\dots,e_{w^m-1}\}$ originally introduced in \cite{wang2012long} and later reformulated in \cite{eMSR_d_eq_n-1}, which plays an important role in our proposed construction.

Given an integer $a\in [0: w^m)$, denote its $w$-ary expansion by $(a_0,\dots,a_{m-1})$
with $a_0$ being the most significant digit, i.e., $a=\sum_{j=0}^{m-1}w^{m-1-j}a_{j}$. For convenience, we also write $a=(a_0,\dots,a_{m-1})$.
For $i\in [0: m)$ and $t\in [0: w)$, define a subset of $\{e_0,\dots,e_{w^m-1}\}$ as
\begin{equation}\label{Eqn_Vt}
V_{i,t}=\{e_a\mid a_i=t,\; 0\le a< w^m\},
\end{equation}
where $a_i$ is the $i$-th element in the $w$-ary expansion of $a$.
Clearly, $V_{i,0},V_{i,1},\dots,V_{i,w-1}$ form a partition of $\{e_0,\dots,e_{w^m-1}\}$ and $|V_{i,t}|=w^{m-1}$ for any $i\in [0 : m)$. Table \ref{example partition} gives two examples of the set partitions defined in \eqref{Eqn_Vt}.
\begin{table}[htbp]
\begin{center}
\caption{(a) and (b) denote the $m$  partitions of  the set $\{e_0,\cdots,e_{w^m-1}\}$    defined by \eqref{Eqn_Vt} for $m=3,w=2$, and $m=2,w=3$, respectively.}
\label{example partition}\begin{tabular}{|c|c|c|c|c|c|c|c|}
\hline $i$ & 0 & 1 & 2 & $i$ & 0 & 1 & 2\\
\hline \multirow{4}{*}{$V_{i,0}$ }&$e_0$&$e_0$&$e_0$&\multirow{4}{*}{$V_{i,1}$ }&$e_4$&$e_2$&$e_1$\\
  & $e_1$&$e_1$&$e_2$ && $e_5$&$e_3$&$e_3$\\
  &$e_2$&$e_4$&$e_4$&& $e_6$&$e_6$&$e_5$\\
  &$e_3$&$e_5$&$e_6$&& $e_7$&$e_7$&$e_7$\\
\hline\multicolumn{8}{c}{\hspace{4mm}(A)}
\end{tabular}\\\vspace{5mm}
\begin{tabular}{|c|c|c|c|c|c|c|c|c|c}
\hline $i$ & 0 & 1 & $i$ & 0 & 1 & $i$ & 0 & 1\\
\hline \multirow{3}{*}{$V_{i,0}$ }&$e_0$&$e_0$&\multirow{3}{*}{$V_{i,1}$ }&$e_3$&$e_1$&\multirow{3}{*}{$V_{i,2}$ }&$e_6$&$e_2$\\
  & $e_1$&$e_3$&& $e_4$&$e_4$&& $e_7$&$e_5$\\
  & $e_2$&$e_6$&& $e_5$&$e_7$&& $e_8$&$e_8$\\
\hline\multicolumn{6}{c}{\hspace{25mm}(B)}
\end{tabular}
\end{center}
\end{table}

For convenience of notation, we further define
\begin{equation}\label{Eqn_Vt_more}
V_{i+sm,t}=V_{i,t} \quad \text{for } i\in[0: m),\; t\in[0: w),\; s\ge1.
\end{equation}

Furthermore, for any $0\le i_1, i_2< sm$ and $i_1\not \equiv i_2\mbox{\ mod\ } m$, we define $V_{i_1,i_2,t_1,t_2}=V_{i_2,i_1,t_2,t_1}=V_{i_1,t_1}\cap V_{i_2,t_2}$, i.e.,
\begin{eqnarray*}\label{Eqn_Vit}
V_{i_1,i_2,t_1,t_2}= V_{i_2,i_1,t_2,t_1}=\{e_a|a_{i_1}=t_1, ~a_{i_2}=t_2, ~a\in [0,~ w^m)\},
\end{eqnarray*}
where $0\le t_1,t_2<w$.
Then, we have
\begin{eqnarray}\label{Eqn_B3}
V_{i_1,t_1}&=&V_{i_1,i_2,t_1,0}\cup\cdots \cup V_{i_1,i_2,t_1,w-1}.
\end{eqnarray}
For convenience of notation,  we also denote by $V_{i_1,t_1}$ and $V_{i_1,i_2,t_1,t_2}$  the $w^{m-1}\times w^m$ and {$w^{m-2}\times w^m$} matrices,
 whose rows are formed by vectors $e_i$
in their corresponding sets, respectively,  such that $i$  is sorted in ascending order.  For example, when $w=2$ and $m=3$, $V_{1,0}$ can be viewed as a $4\times 8$ matrix as follows
\begin{eqnarray*}
V_{1,0}=\left(e_0^{\top}~ e_1^{\top} ~e_4^{\top}~ e_5^{\top}\right)^{\top},
\end{eqnarray*}
where $\top$ represents the transpose operator.

\section{New Construction of an MSR Code With a Small Sub-packetization Level}\label{sec:construction}

\begin{Construction}\label{Con1}
For positive integers $n,k,m$ such that $n=2m$\footnote{An $(n, k)$ array code with $2\nmid n$ can be obtained by shortening an $(n+1, k+1)$ array code. Therefore, we assume $2\mid n$ for notational convenience.} and $k<n-1$, let $k<d\le n-1$, $r=n-k$, $w=d-k+1$ and $N=w^{m}$. The parity-check matrix $(A_{t,i})_{t\in [0: r),i\in[0: n)}$ of the new $(n,k)$ array code $\mathcal{C}$ over a finite field  $\mathbf{F}_q$ is defined by
\begin{equation}\label{Eqn_A_ti}
    A_{t,i}=A_i^t, \quad t\in [0: r),\; i\in [0:n)
\end{equation}
and
\begin{equation}\label{Eqn_matrix_A}
\left(\begin{array}{c}
     V_{i,0}  \\
     V_{i,1}\\
     \vdots\\
     V_{i,w-1}
\end{array}\right)A_i=
\left\{
\begin{array}{ll}
\left(\begin{array}{c}
    \lambda_{i,0} V_{i,0}  \\
   \lambda_{i,1}  V_{i,1}\\
     \vdots\\
   \lambda_{i,w-1}  V_{i,w-1}
\end{array}\right), &\text{if } i\in [0: m), \\[30pt]
\left(\begin{array}{c}
    \lambda_{i} V_{i,1}  \\
     V_{i,2}\\
     \vdots\\
     V_{i,w-1}\\
    V_{i,0}
\end{array}\right)   , & \text{if } i\in [m: n),\\
\end{array}
\right.
\end{equation}
where $\lambda_{i,u}$, $i\in [0:m)$, $u\in [0: w)$, and $\lambda_j$, $j\in [m: n)$ are nonzero elements in $\mathbf{F}_q$. 
\end{Construction}

{
Together with Table \ref{example partition}, we provide a toy example of the structure of the new code $\mathcal{C}$ as follows.
\begin{Example}
Let $m=3, n=6, k=3$, and $d=4$, then the parity-check matrix $(A_{t,i})_{t\in [0: 3),i\in[0: 6)}$ of the new $(6, 3)$ array code $\mathcal{C}$ in Construction \ref{Con1} are specified as follows:
\begin{align*}
A_0=&{\rm diag}(\lambda_{0,0},\lambda_{0,0},\lambda_{0,0},\lambda_{0,0},\lambda_{0,1},\lambda_{0,1},\lambda_{0,1},\lambda_{0,1}), \quad A_1={\rm diag}(\lambda_{1,0},\lambda_{1,0},\lambda_{1,1},\lambda_{1,1},\lambda_{1,0},\lambda_{1,0},\lambda_{1,1},\lambda_{1,1}),  \\   
A_2=&{\rm diag}(\lambda_{2,0},\lambda_{2,1},\lambda_{2,0},\lambda_{2,1},\lambda_{2,0},\lambda_{2,1},\lambda_{2,0},\lambda_{2,1}),    
\end{align*}
\begin{align*}
A_3=&\begin{pmatrix}
&&&&\lambda_3&&&\\
&&&&&\lambda_3&&\\
&&&&&&\lambda_3&\\
&&&&&&&\lambda_3\\
1&&&&&&&\\
&1&&&&&&\\
&&1&&&&&\\
&&&1&&&&\\
\end{pmatrix}, A_4=\begin{pmatrix}
&&\lambda_4&&&&&\\
&&&\lambda_4&&&&\\
1&&&&&&&\\
&1&&&&&&\\
&&&&&&\lambda_4&\\
&&&&&&&\lambda_4\\
&&&&1&&&\\
&&&&&1&&\\
\end{pmatrix},\\
A_5=&\begin{pmatrix}
&\lambda_5&&&&&&\\
1&&&&&&&\\
&&&\lambda_5&&&&\\
&&1&&&&&\\
&&&&&\lambda_5&&\\
&&&&1&&&\\
&&&&&&&\lambda_5\\
&&&&&&1&\\
\end{pmatrix}.     
\end{align*}
\end{Example}
}

\begin{Theorem}\label{Thm_MDS}
The $(n,k)$ array code $\mathcal{C}$ defined in Construction \ref{Con1} is MDS if
\begin{itemize}
\item [i)] $\lambda_{i,u}\ne \lambda_{j,v}$ for $i,j\in [0:m)$ with $i\ne j$, and $u,v\in [0: w)$;
\item  [ii)] $\lambda_{i,t}^w\ne \lambda_j$ for $i\in [0:m)$, $t\in [0: w)$, and $j\in [m: n)\backslash \{i+m\}$;
\item [iii)] $\prod_{t=0}^{w-1}\lambda_{i,t}\ne \lambda_{i+m}$ for $i\in [0: m)$;
\item [iv)] $\lambda_{i}\ne \lambda_{j}$ for $i, j\in [m: n)$ with $i\ne j$.
\end{itemize}
\end{Theorem}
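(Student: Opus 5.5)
We prove MDS-ness by showing that for every $r$-subset $J=\{j_1,\dots,j_r\}\subseteq[0:n)$ of nodes, the block Vandermonde matrix $(A_{j}^t)_{t\in[0:r),\,j\in J}$ is nonsingular. Equivalently, if $\sum_{j\in J}A_j^t\mathbf{f}_j=\mathbf{0}$ for all $t\in[0:r)$, then all $\mathbf{f}_j=\mathbf{0}$.

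\textbf{Tensor structure.} Write $\mathbf{F}_q^{w^m}=\bigotimes_{i=0}^{m-1}\mathbf{F}_q^{w}$ along the $w$-ary digits. By \eqref{Eqn_matrix_A}, $A_i=I\otimes\cdots\otimes D_i\otimes\cdots\otimes I$ for $i<m$, where $D_i={\rm diag}(\lambda_{i,0},\dots,\lambda_{i,w-1})$ sits in factor $i$. Likewise $A_{i+m}=I\otimes\cdots\otimes P_i\otimes\cdots\otimes I$, where $P_i$ is the weighted cyclic shift in factor $i$ with $P_i^w=\lambda_{i+m}I$. Hence $A_a$ and $A_b$ commute unless $\{a,b\}=\{i,i+m\}$.

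We then compute the differences that will appear:
\begin{itemize}
\item[1)] $A_i-A_j$ with $i\neq j<m$ is diagonal with entries $\lambda_{i,u}-\lambda_{j,v}$, so it is invertible by i).
\item[2)] $A_{i+m}-A_{j+m}$ with $i\ne j$ is a difference of commuting matrices. Over a splitting field their eigenvalues are $w$-th roots of $\lambda_{i+m}$ and $\lambda_{j+m}$; since $\gcd(w,q-1)=1$, a common eigenvalue forces $\lambda_{i+m}=\lambda_{j+m}$. So the difference is invertible by iv).
\item[3)] $A_i-A_{j+m}$ with $j\ne i$ is invertible by ii), by the same eigenvalue argument.
\item[4)] For the non-commuting pair, $\det(D_i-P_i)=\pm\big(\prod_t\lambda_{i,t}-\lambda_{i+m}\big)$, since the matrix is a diagonal minus a weighted cyclic shift. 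So $D_i-P_i$ is invertible by iii).
\end{itemize}

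\textbf{Step 1: eliminate unpaired nodes.} Call $j\in J$ \emph{unpaired} if its partner ($j\pm m$) is not in $J$. An unpaired $A_{j_0}$ commutes with every other $A_j$, $j\in J$. Form the new equations $E'_t=E_{t+1}-A_{j_0}E_t$ for $t\in[0:r-1)$. These read $\sum_{j\ne j_0}A_j^t\mathbf{g}_j=\mathbf{0}$ with $\mathbf{g}_j=(A_j-A_{j_0})\mathbf{f}_j$. The result is a block Vandermonde system with $r-1$ blocks of the same type, and $A_j-A_{j_0}$ is invertible by 1)--3). Iterating, we reduce to the case where $J$ is a union of pairs $\{i,i+m\}$, $i\in S$. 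Once the reduced system forces $\mathbf{g}_j=\mathbf{0}$, back-substitution gives $\mathbf{f}_j=\mathbf{0}$, and $\mathbf{f}_{j_0}=\mathbf{0}$ follows from the $t=0$ row because $A_{0,j}=I$.

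\textbf{Step 2: eliminate a whole pair.} This is the main obstacle, since the commuting trick fails inside a pair. Fix $i\in S$ and look for operators $L_0,L_1$ acting only on factor $i$ (so they commute with every other block in $J$) such that
\[
L_0+L_1X+X^2=0 \quad\text{for both } X=D_i \text{ and } X=P_i .
\]
Subtracting the two conditions gives $L_1(D_i-P_i)=P_i^2-D_i^2$. So $L_1=(P_i^2-D_i^2)(D_i-P_i)^{-1}$, which exists by iii), and then $L_0=-D_i^2-L_1D_i$.

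Form $E'_t=L_0E_t+L_1E_{t+1}+E_{t+2}$ for $t\in[0:r-2)$. Because the operator polynomial multiplies on the left and $L_0+L_1X+X^2=0$ for $X\in\{D_i,P_i\}$, the terms of nodes $i,i+m$ vanish. For any other block $Y$, the $L_s$ commute with $Y$, so the term becomes $Y^t\,\phi(Y)\mathbf{f}_j$ with $\phi(Y)=L_0+L_1Y+Y^2$. We again get a block Vandermonde system with $r-2$ blocks, in the variables $\mathbf{g}_j=\phi(Y_j)\mathbf{f}_j$.

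The key point to verify is that $\phi(Y)$ is invertible for every remaining $Y=A_a$, with $a$ in another pair. I plan to show this by simultaneously triangularizing over an extension field. The operators $Y$ and $D_i,P_i$ act on different tensor factors, so $\phi(Y)$ is a polynomial in commuting families, and its determinant factors through the eigenvalues $\mu$ of $Y$. Each factor should reduce to $\det\big((\mu I-D_i)(\mu I-P_i)\big)$ up to the invertible $D_i-P_i$ factors. This is nonzero by 1)--3), because $\mu$ is a diagonal value $\lambda_{a,u}$ or a $w$-th root of $\lambda_{a+m}$ from a different digit.

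\textbf{Conclusion.} Induct on $|S|$. The base case is a single pair, i.e., the $2\times2$ block matrix with rows $(I,\ I)$ and $(A_i,\ A_{i+m})$. Block elimination reduces it to the invertibility of $A_{i+m}-A_i=I\otimes(P_i-D_i)\otimes I$, which is case 4). Back-substitution then yields $\mathbf{f}_j=\mathbf{0}$ for all $j\in J$, so every $r\times r$ block submatrix is nonsingular and $\mathcal{C}$ is MDS.
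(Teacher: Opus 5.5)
Your argument is correct, but it handles the non-commuting pair $\{i,i+m\}$ differently from the paper.

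The paper proves an abstract block-Vandermonde lemma (Lemma \ref{Lemma block matrix}) and eliminates one block at a time, exactly as in your Step~1. The row operation always yields columns $(B_t-B_{r-1})B_t^{s}$. When the eliminated block $B_{r-1}$ fails to commute with a remaining block $B_0$, the paper left-multiplies all rows by $(B_0-B_{r-1})^{-1}$. This inverse commutes with every other remaining block, so a smaller block Vandermonde matrix is recovered. Consequently only the pairwise differences $A_a-A_b$ are ever needed. Theorem \ref{Thm_diff_A} checks these by the same computations as your items 1)--4).

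You instead annihilate the whole pair at once with a quadratic left-coefficient polynomial vanishing at both $D_i$ and $P_i$. This forces an extra invertibility check on $\phi(Y)$ and depends on the tensor-product structure. The paper's route is shorter and more general: it needs neither tensor factors nor extension fields, and it applies to any family in which each block commutes with all but at most one other. Yours has the merit of exhibiting conditions i), ii), iv) directly as eigenvalue separations.

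The step you only sketch does hold:
\[
\mu^2I+\mu L_1+L_0=(\mu I+D_i+L_1)(\mu I-D_i), \qquad -(L_1+D_i)=(D_i-P_i)P_i(D_i-P_i)^{-1}.
\]
Hence $\det(\mu^2I+\mu L_1+L_0)=(\mu^w-\lambda_{i+m})\prod_{t}(\mu-\lambda_{i,t})$, which is nonzero for every relevant $\mu$ by i), ii), iv).

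Finally, drop the appeal to $\gcd(w,q-1)=1$ in 2). It is not a hypothesis of this theorem, and it is unnecessary: a common eigenvalue $\mu$ already gives $\lambda_{i+m}=\mu^w=\lambda_{j+m}$. Equivalently, $X^w-Y^w=(X-Y)\sum_{s}X^{w-1-s}Y^{s}$ for commuting $X,Y$.
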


\begin{proof}
The proof is given in Section \ref{sec:MDS}.
\end{proof}

\begin{Theorem}\label{Thm_opt_RB}
The $(n,k)$ array code $\mathcal{C}$ defined in Construction \ref{Con1} achieves optimal repair bandwidth if $\lambda_{i,0}, \lambda_{i,1},\dots, \lambda_{i,w-1}$ are pairwise distinct and { there exists $\mu_i \in \mathbf{F}_q$ such that $\lambda_{i+m}=\mu_i^w$ for $i\in [0: m)$.}
\end{Theorem}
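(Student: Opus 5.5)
We use the single-subscript mechanism of Section \ref{sec:repair_mech}. For each failed node $i$ we pick a selection matrix $S_i$ with $N/w$ rows, rather than $N/r$. We arrange that every row space $S_iA_j$, $j\neq i$, stays inside the row space of $S_i$, and that $S_iA_i^t$, $t\in[0:w)$, together have rank $N$. We then count equations against unknowns for a general set of $d$ helpers and show that the resulting square system is nonsingular.

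\textbf{Choice of $S_i$ and the invariance property.} First let $i\in[m:n)$ and put $S_i=V_{i-m,0}$.
\begin{itemize}
\item For $j\in[0:m)$, $A_j$ is diagonal and depends only on digit $j$, so $V_{i-m,0}A_j=D_jV_{i-m,0}$ for a diagonal $D_j$. When $j=i-m$ this is the scalar $\lambda_{i-m,0}I$.
\item For $j\in[m:n)\setminus\{i\}$, $A_j$ cyclically shifts digit $j-m\neq i-m$, so $V_{i-m,0}A_j=P_jV_{i-m,0}$ for a generalized permutation $P_j$.
\item For $j=i$, by \eqref{Eqn_matrix_A} the rows $V_{i-m,0}A_i^t$, $t\in[0:w)$, are nonzero multiples of the rows of $V_{i-m,t}$ (suitably reindexed). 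Hence they have rank $N$, which gives R1.
\end{itemize}

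Now let $i\in[0:m)$. Here $A_{i+m}$ cyclically shifts digit $i$ and $A_{i+m}^w=\lambda_{i+m}I$. Using $\lambda_{i+m}=\mu_i^w$, set
\[
S_i=\sum_{u=0}^{w-1}\mu_i^{u}V_{i,u}
\]
(powers adjusted to the shift direction in \eqref{Eqn_matrix_A}), so that $S_iA_{i+m}=\mu_iS_i$ is an eigen-relation. For $j\neq i,i+m$, $A_j$ acts only on a digit other than $i$, so $S_iA_j=B_jS_i$ for some $B_j$ of size $N/w$. Finally $S_iA_i^t=\sum_u\mu_i^u\lambda_{i,u}^tV_{i,u}$. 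For $t\in[0:w)$ this is a Vandermonde combination in the pairwise distinct $\lambda_{i,u}$, so it has rank $N$, which gives R1.

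\textbf{Counting for general $d$.} Fix a helper set $H$ with $|H|=d$ and let $E=[0:n)\setminus(H\cup\{i\})$, so $|E|=n-1-d=r-w$. From \eqref{Eqn repairment gene} we get $rN/w$ equations, and we download $S_i\mathbf{f}_j$ from each $j\in H$; this is exactly $dN/w$ symbols, the bound \eqref{Eqn_gamma_opt}. We move all helper terms to the right-hand side. The unknowns are then $\mathbf{f}_i$, i.e.\ its $w$ components $V_{i,u}\mathbf{f}_i$ (or, for $i\ge m$, the components $V_{i-m,u}\mathbf{f}_i$ up to scaling), together with $S_i\mathbf{f}_j$ for $j\in E$. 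Their number is $N+(r-w)N/w=rN/w$, so the system is square.

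Its coefficient matrix is block Vandermonde: row block $t$ reads $\big(X_1^t,\dots,X_r^t\big)$ with $X_\ell$ of order $N/w$. For $i<m$, $w$ of the $X_\ell$ are the scalars $\lambda_{i,u}I$, and the others are the restricted operators $B_j$, $j\in E$. For $i\ge m$, the $w$ component blocks of $\mathbf{f}_i$ together form a companion-type block whose eigenvalues are the $w$-th roots of $\lambda_i$. Over a suitable extension field this block can be diagonalized into $w$ scalar blocks, and nonsingularity is unaffected by the extension.

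\textbf{Main obstacle: nonsingularity of this reduced block Vandermonde matrix.} The restricted operators $B_j$ are diagonal or generalized-permutation matrices acting on the $m-1$ digits other than the repaired one. Two of them fail to commute exactly when they come from a pair $j,j+m$ acting on the same digit. So the reduced matrix has the same structure as the parity-check matrix of Construction \ref{Con1}, with fewer digits and some extra scalar blocks. The plan is to rerun the MDS argument of Theorem \ref{Thm_MDS} (Section \ref{sec:MDS}) on it: peel off the commuting blocks as in the Ye–Barg lemma, and treat each noncommuting pair $\{j,j+m\}$ with the new technique.

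The spectral separation that argument needs must come from the hypotheses. Distinctness among the scalar blocks is the assumed distinctness of the $\lambda_{i,u}$. Separation of the scalar blocks from the $B_j$ should follow from conditions i)--iv) of Theorem \ref{Thm_MDS} together with $\lambda_{i+m}=\mu_i^w$. I expect verifying this separation, in particular that $\mu_i$ and the $\lambda_{i,u}$ avoid the eigenvalues of the other $B_j$, to be the delicate step. If the general MDS argument does not transfer cleanly, I would first try the case $|E|=1$ and then induct on $|E|$ by eliminating one non-helper at a time.
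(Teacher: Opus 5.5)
\textbf{There is a genuine gap in the general-$d$ step.} Your choice of $S_i$ and your $d=n-1$ verification agree with the paper. The problem is that the separation you hope to extract from conditions i)--iv) is false when the non-contacted set $E$ contains the \emph{partner} of the failed node, and in that case your square system is singular.

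For $i\in[0:m)$ with $i+m\in E$ you have $S_iA_{i+m}=\mu_iS_i$. So $B_{i+m}=\mu_iI$ sits next to the scalar blocks $\lambda_{i,0}I,\dots,\lambda_{i,w-1}I$, and you need $\mu_i\notin\{\lambda_{i,0},\dots,\lambda_{i,w-1}\}$. The hypotheses do not give this. Condition ii) of Theorem \ref{Thm_MDS} explicitly excludes $j=i+m$, and iii) only constrains $\prod_t\lambda_{i,t}$.

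In fact the paper's own explicit MSR code fails here. It has $\lambda_{i,t}=c^{iw+t}$ and $\lambda_{i+m}=c^{iw^2}$, and since $\gcd(w,q-1)=1$ makes the $w$-th root unique, $\mu_i=c^{iw}=\lambda_{i,0}$. Concretely, let node $0$ fail with $d=n-2$ and node $m$ not contacted. Your equations read $\sum_u\lambda_{0,u}^ty_u+\mu_0^tz=b_t$ with $\mu_0=\lambda_{0,0}=1$. So $y_0$ and $z=S_0\mathbf{f}_m$ enter only through $y_0+z$.

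The mirror case $i\in[m:n)$, $i-m\in E$ fails the same way. With $S_i=V_{i-m,0}$, the equations for $t=0$ and $t=w$ are $x_0+z=b_0$ and $\lambda_ix_0+\lambda_{i-m,0}^wz=b_w$. These are proportional, since $\lambda_{i-m,0}^w=\lambda_i$ for these parameters.

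Your fallback of eliminating non-helpers one at a time inside this system cannot help, because the system itself is singular. The obstruction lies in what the helpers transmit, not in how the equations are solved.

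\textbf{The missing idea is the paper's reduction.} It removes a non-contacted node $e$, with partner $p(e)=e\pm m$, from the parity-check groups \emph{before} any selection matrix is applied:
\begin{itemize}
\item For $t=r-1,\dots,1$, subtract $A_e$ times group $t-1$ from group $t$.
\item By Lemma \ref{lem-commute-matrix}, $A_e$ commutes with every $A_j$ except $A_{p(e)}$. Left-multiply by $(A_{p(e)}-A_e)^{-1}$, which is invertible by condition iii).
\item This yields a code $\mathcal{C}'$ with blocks $A_j^t$, $t\in[0:w)$, $j\neq e$. Its contents are $\mathbf{f}'_{p(e)}=\mathbf{f}_{p(e)}$ and $\mathbf{f}'_j=(A_{p(e)}-A_e)^{-1}(A_j-A_e)\mathbf{f}_j$ otherwise.
\item One then runs the $d=n-1$ scheme on $\mathcal{C}'$. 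In the partner case the failed node is unpaired in $\mathcal{C}'$, so R1 needs only the distinctness of the $\lambda_{i,u}$, and $\mu_i$ never enters.
\end{itemize}

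Each helper sends $S_i\mathbf{f}'_j$. This is still $N/w$ linear combinations of $\mathbf{f}_j$, but it is in general not equivalent to $S_i\mathbf{f}_j$. For $j\in[0:m)\setminus\{i\}$, the row space of $S_i(A_i-A_{i+m})^{-1}(A_j-A_{i+m})$ differs from that of $S_i$. This extra freedom is exactly what removes the collision. The paper writes out only the case of node $0$ failed with node $n-1$ absent, but the same reduction covers the partner case.

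When $E$ avoids the partner of the failed node, your direct count can be completed as you sketch. The needed separations are instances of i), ii), iv), plus iii) for a noncommuting pair inside $E$.

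One further point: your diagonalization of the companion block needs $x^w-\lambda_i$ to be separable. This fails e.g.\ for $w=2$, $q=2^e$, which $\gcd(w,q-1)=1$ allows. Solving directly from the $t=0$ and $t=w$ equations avoids this.
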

\begin{proof}
The proof is given in Section \ref{sec:repair}.
\end{proof}

By Theorems \ref{Thm_MDS} and \ref{Thm_opt_RB}, we have the following result.
\begin{Theorem}
The $(n, k)$ array code $\mathcal{C}$ defined in Construction \ref{Con1} is an MSR code over $\mathbf{F}_q$ if $q>\frac{wn}{2}$ with $\gcd(w, q-1)=1$, and by setting
\begin{equation}\label{Eq_lam_eq_n-1}
\lambda_{i,t}=c^{iw+t}, \quad i\in [0: m),\; t\in [0: w),
\end{equation}
and $\lambda_{i+m}=c^{iw^2}$ for $i\in [0:m)$.
\end{Theorem}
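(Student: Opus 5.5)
The plan is to check that the stated choice of $\lambda$'s satisfies the hypotheses of Theorems~\ref{Thm_MDS} and~\ref{Thm_opt_RB}. The result then follows by combining those two theorems. Take $c$ to be a primitive element of $\mathbf{F}_q$, so $c$ has multiplicative order $q-1$. Two elements $c^a, c^b$ are then equal iff $a\equiv b \pmod{q-1}$. Since $q>wn/2=mw$, the exponents $iw+t$ for $i\in[0:m)$, $t\in[0:w)$ range over $[0:mw)$ and are pairwise distinct with values less than $q-1$. Hence all $\lambda_{i,t}$ are pairwise distinct. This gives condition i) of Theorem~\ref{Thm_MDS} and the distinctness requirement of Theorem~\ref{Thm_opt_RB}. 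The second requirement of Theorem~\ref{Thm_opt_RB} holds with $\mu_i=c^{iw}$, since $\lambda_{i+m}=c^{iw^2}=\mu_i^w$.

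For the remaining MDS conditions, use $\gcd(w,q-1)=1$: the map $x\mapsto x^w$ is then a bijection on $\mathbf{F}_q^*$. Equivalently, $wa\equiv wb\pmod{q-1}$ forces $a\equiv b$. For iv), $\lambda_{i+m}=\lambda_{j+m}$ means $(c^{iw})^w=(c^{jw})^w$, so $c^{iw}=c^{jw}$. Since $iw,jw<mw<q$ are distinct, this fails unless $i=j$.

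For ii), $\lambda_{i,t}^w=\lambda_{j+m}$ reads $(c^{iw+t})^w=(c^{jw})^w$. Injectivity gives $c^{iw+t}=c^{jw}$, so $iw+t=jw$, since both exponents lie in $[0:mw)\subseteq[0:q-1)$. This forces $t=0$ and $i=j$, which is exactly the excluded case $j+m=i+m$.

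For iii), compute
\[
\prod_{t=0}^{w-1}\lambda_{i,t}=c^{iw^2+w(w-1)/2},
\]
so equality with $\lambda_{i+m}=c^{iw^2}$ would require $c^{w(w-1)/2}=1$, i.e.\ $(q-1)\mid w(w-1)/2$.

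The main obstacle is exactly this last step: ruling out $(q-1)\mid w(w-1)/2$. Coprimality with $w$ reduces the condition to $(q-1)\mid (w-1)/2$ when $w$ is odd. When $w$ is even, $q-1$ must be odd, which forces $q$ even, and the condition becomes $(q-1)\mid (w-1)$. Since $m\ge 2$, we have $q-1\ge mw\ge 2w>w-1\ge (w-1)/2$. Also $w(w-1)/2$ is positive because $w\ge 2$. So in either case the divisibility is impossible, and iii) holds.

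With i)--iv) verified, Theorem~\ref{Thm_MDS} gives the MDS property and Theorem~\ref{Thm_opt_RB} gives optimal repair bandwidth. Hence $\mathcal{C}$ is MSR.
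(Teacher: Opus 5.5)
Your proof is correct and follows the paper's route: take $c$ primitive, use $\gcd(w,q-1)=1$ to check ii) and iv) (the paper phrases this as $\beta=c^w$ being primitive, you as injectivity of $x\mapsto x^w$), and read off the repair hypothesis with $\mu_i=c^{iw}$. The only divergence is in iii), where your coprimality-based divisibility reduction is valid but not needed, since the paper simply observes $0<w(w-1)/2<(w-1)m<q-1$.
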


\begin{proof}
It suffices to show that the conditions in Theorems \ref{Thm_opt_RB} and \ref{Thm_MDS} are satisfied. From \eqref{Eq_lam_eq_n-1}, it is clear that $\lambda_{i,0}, \lambda_{i,1},\dots, \lambda_{i,w-1}$ are pairwise distinct and {$\lambda_{i+m}=(c^{iw})^w\in \mathbf{F}_q$} for any $i\in [0: m)$.

Now we verify conditions i)–iv) of Theorem \ref{Thm_MDS}. Condition i) is directly satisfied by \eqref{Eq_lam_eq_n-1}.
Let $\beta=c^w$, since $\gcd(w, q-1)=1$, $\beta$ is also a primitive element in $\mathbf{F}_q$.

For $i,j\in [0: m)$ with $i\ne j$ and $t\in [0: w)$,
we have $\lambda_{i,t}^w-\lambda_{j+m}=c^{iw^2+tw}-c^{jw^2}=c^{jw^2}(\beta^{(i-j)w+t}-1)\ne 0$,
since
\begin{equation*}
0<|(i-j)w+t|\le w(m-1)+w-1<q-1,
\end{equation*}
which verifies condition ii).

For $i\in [0: m)$,
we have $$\prod_{t=0}^{w-1}\lambda_{i,t}-\lambda_{i+m}=c^{iw^2+\frac{(w-1)w}{2}}-c^{iw^2}\ne 0$$ because $c^{\frac{(w-1)w}{2}}\ne 1$ given $0<\frac{(w-1)w}{2}<(w-1)m<q-1$, which confirms condition iii). Condition iv) is obviously satisfied. This completes the proof.
\end{proof}

\section{The MDS Property of the Array Code in Construction \ref{Con1}}\label{sec:MDS}

Before establishing the MDS property of the proposed array code, we present the following lemmas, which are instrumental in the verification.

\begin{Lemma}\label{Lemma block matrix}
(Block Vandermonde Matrix) Let $B_0, B_1, \dots, B_{r-1}$ be $N\times N$ matrices such that for any given $i\in [0: r)$, $B_i$ commutes with at least $r-2$ of the remaining $r-1$ matrices. Then the matrix
\begin{equation}\label{Eqn_blockVan}
  \left(
    \begin{array}{cccc}
      I & I & \cdots & I \\
      B_0 & B_1 & \cdots & B_{r-1} \\
      \vdots & \vdots & \ddots & \vdots \\
     B_0^{r-1} & B_1^{r-1} & \cdots & B_{r-1}^{r-1} \\
    \end{array}
  \right)
\end{equation}
is nonsingular if $B_i-B_j$ is nonsingular for all $i,j\in [0: r)$ with $i\ne j$.
\end{Lemma}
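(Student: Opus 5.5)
The plan is to prove the nonsingularity by a kernel argument combined with induction on $r$. The base cases are easy. For $r=1$ the matrix is $I$. For $r=2$ the matrix $\begin{pmatrix} I & I\\ B_0 & B_1\end{pmatrix}$ is block-row equivalent to $\begin{pmatrix} I & I\\ 0 & B_1-B_0\end{pmatrix}$, which is nonsingular; this needs no commutativity at all.

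For general $r$, suppose $\mathbf{x}=(x_0,\dots,x_{r-1})$ lies in the kernel, i.e. $E_t:\ \sum_{j}B_j^t x_j=0$ for $t\in[0:r)$. I will repeatedly use one fact: if $B$ commutes with $B_j$, then $B_j-B$ commutes with $B_j$. The hypothesis says the noncommuting pairs among $B_0,\dots,B_{r-1}$ form a matching, and this is the structure I will exploit.

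First elimination step, with pivot $B_0$. Form $E_t-B_0E_{t-1}$ for $t\in[1:r)$. Since $B_j^t-B_0B_j^{t-1}=(B_j-B_0)B_j^{t-1}$ holds for every $j$ without any commutativity, this gives
\[E'_t:\ \sum_{j\ge1}(B_j-B_0)B_j^{t-1}x_j=0,\qquad t\in[1:r).\]
If $B_0$ commutes with every other $B_j$, then $(B_j-B_0)B_j^{t-1}=B_j^{t-1}(B_j-B_0)$. Setting $y_j=(B_j-B_0)x_j$ turns $\{E'_t\}$ into a block Vandermonde system of size $r-1$ in $B_1,\dots,B_{r-1}$. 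This smaller family still satisfies the hypothesis, since deleting a matrix can only remove noncommuting pairs. Induction then gives $y_j=0$, hence $x_j=0$ because $B_j-B_0$ is nonsingular, and $E_0$ gives $x_0=0$.

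The main obstacle is the case where no matrix commutes with all the others, e.g. when the noncommuting pairs form a perfect matching. Relabel so that $B_0$ fails to commute only with $B_1$. In $E'_t$ the column $j=1$ carries the factor $(B_1-B_0)$ on the wrong (left) side, so it cannot be pulled out. My plan is to eliminate $x_1$ by a conjugated pivot. Put $D=B_1-B_0$, which is nonsingular, and $C=DB_1D^{-1}$, and form $E'_t-CE'_{t-1}$ for $t\in[2:r)$.

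In column $1$ this gives $DB_1^{t-1}x_1-DB_1D^{-1}DB_1^{t-2}x_1=0$ exactly. For each $j\ge2$, the matrix $B_j$ commutes with both $B_0$ and $B_1$, so it commutes with $D$, $D^{-1}$ and hence $C$. Consequently, with $y_j=(B_j-B_0)x_j$, the $j$-th term becomes $(B_j-C)B_j^{t-2}y_j=B_j^{t-2}(B_j-C)y_j$. Moreover $B_j-C=D(B_j-B_1)D^{-1}$ is nonsingular.

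Setting $z_j=(B_j-C)y_j$, the equations for $t\in[2:r)$ form a block Vandermonde system of size $r-2$ in $B_2,\dots,B_{r-1}$. This subfamily still satisfies the commutation hypothesis and the pairwise-difference hypothesis. By the induction hypothesis, $z_j=0$, so $y_j=0$ and $x_j=0$ for $j\ge2$. Then $E'_1$ reduces to $(B_1-B_0)x_1=0$, giving $x_1=0$, and finally $E_0$ gives $x_0=0$. So the kernel is trivial and the matrix in \eqref{Eqn_blockVan} is nonsingular.

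The step I expect to be most delicate is checking that $C$ commutes with each $B_j$, $j\ge2$, and that $B_j-C$ is invertible. Both rest on $B_j$ commuting with both members of the noncommuting pair $\{B_0,B_1\}$, which is exactly what the hypothesis that each matrix commutes with at least $r-2$ of the others guarantees.
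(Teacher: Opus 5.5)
Your proof is correct. It differs from the paper's only in the noncommuting case.

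The paper also eliminates with a single pivot, using $B_{r-1}$ rather than $B_0$. It handles the case where the pivot commutes with all other matrices exactly as you do. When $B_{r-1}$ fails to commute with some $B_0$, however, the paper keeps the partner in the reduced matrix:
\begin{itemize}
\item it factors $B_t-B_{r-1}$ out on the right of the columns $t\in[1:r-1)$;
\item it left-multiplies every block row by $(B_0-B_{r-1})^{-1}$, which turns column $0$ back into $B_0^s$;
\item since $(B_0-B_{r-1})^{-1}$ commutes with $B_1,\dots,B_{r-2}$, it pulls this factor out on the right of the remaining columns.
\end{itemize}
This leaves a block Vandermonde matrix in $B_0,\dots,B_{r-2}$. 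Both cases are therefore handled by one uniform step $r\to r-1$ and an ordinary induction.

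You instead perform a second round of block Gaussian elimination with the conjugated pivot $C=DB_1D^{-1}$. This is exactly the left multiplier that annihilates the partner's column. Your identities $CB_j=B_jC$ and $B_j-C=D(B_j-B_1)D^{-1}$ for $j\ge 2$ then remove the whole noncommuting pair at once, reducing $r\to r-2$.

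What each route buys:
\begin{itemize}
\item Your route needs strong induction with the two base cases $r=1,2$, which you supply.
\item In exchange, it is plain elimination plus changes of variables, with no rescaling of block rows.
\item It also makes explicit that the only commutation used is that of each $B_j$, $j\ge 2$, with both members of the noncommuting pair.
\end{itemize}
Both arguments rest on the same structural fact: the noncommuting pairs form a matching.
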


\begin{proof}
We first prove the claim that the matrix in \eqref{Eqn_blockVan} is nonsingular if
\begin{equation*}
\begin{pmatrix}
I&I&\cdots&I\\
B_{0} &B_{1}&\cdots&B_{{r-2}}\\ 
\vdots &\vdots&\ddots&\vdots\\    
B_{0}^{r-2} &B_{1}^{r-2}&\cdots&B_{{r-2}}^{r-2}\\    
\end{pmatrix}    
\end{equation*}
and {$B_{t}-B_{{r-1}}$ is nonsingular for $t\in [0: r-1)$}.

For a square matrix $A$, define $\operatorname{Inv}(A)=1$ if $A$ is nonsingular and $\operatorname{Inv}(A)=0$ otherwise.
Then we have
\begin{align*}
&\operatorname{Inv}\begin{pmatrix}
I&I&\cdots&I\\
B_{0} &B_{1}&\cdots&B_{{r-1}}\\ 
\vdots &\vdots&\ddots&\vdots\\    
B_{0}^{r-1} &B_{1}^{r-1}&\cdots&B_{{r-1}}^{r-1}\\    
\end{pmatrix}\\
=&
\operatorname{Inv}\begin{pmatrix}
I&I&\cdots&I&I\\
B_{0}-B_{{r-1}} &B_{1}-B_{{r-1}}&\cdots&B_{{r-1}}-B_{{r-1}}&\mathbf{0}\\ 
\vdots &\vdots&\ddots&\vdots&\vdots\\    
(B_{0}-B_{{r-1}})B_{0}^{r-2} &(B_{1}-B_{{r-1}})B_{1}^{r-2}&\cdots&{(B_{{r-2}}-B_{{r-1}})B_{r-2}^{r-2}}&\mathbf{0}\\    
\end{pmatrix}\\
= &
\operatorname{Inv}\begin{pmatrix}
B_{0}-B_{{r-1}} &B_{1}-B_{{r-1}}&\cdots&B_{{r-2}}-B_{{r-1}}\\ 
\vdots &\vdots&\ddots&\vdots\\    
(B_{0}-B_{{r-1}})B_{0}^{r-2} &(B_{1}-B_{{r-1}})B_{1}^{r-2}&\cdots&(B_{{r-2}}-B_{{r-1}})B_{{r-2}}^{r-2}\\    
\end{pmatrix}.
\end{align*}    
If $B_{r-1}$ commutes with all the remaining matrices $B_0,\dots,B_{r-2}$, then we have
\begin{align*}
  &  \operatorname{Inv}\begin{pmatrix}
B_{0}-B_{{r-1}} &B_{1}-B_{{r-1}}&\cdots&B_{{r-2}}-B_{{r-1}}\\ 
\vdots &\vdots&\ddots&\vdots\\    
(B_{0}-B_{{r-1}})B_{0}^{r-2} &(B_{1}-B_{{r-1}})B_{1}^{r-2}&\cdots&(B_{{r-2}}-B_{{r-1}})B_{{r-2}}^{r-2}    
\end{pmatrix}\\= &  \operatorname{Inv}\begin{pmatrix}
B_{0}-B_{{r-1}} &B_{1}-B_{{r-1}}&\cdots&B_{{r-2}}-B_{{r-1}}\\ 
\vdots &\vdots&\ddots&\vdots\\    
B_{0}^{r-2}(B_{0}-B_{{r-1}}) &B_{1}^{r-2}(B_{1}-B_{{r-1}})&\cdots&B_{{r-2}}^{r-2}(B_{{r-2}}-B_{{r-1}})    
\end{pmatrix}\\= & \left(\prod_{t=0}^{r-2}\operatorname{Inv}(B_{t}-B_{{r-1}})\right) \times  \operatorname{Inv}\begin{pmatrix}
I&I&\cdots&I\\
B_{0} &B_{1}&\cdots&B_{{r-2}}\\ 
\vdots &\vdots&\ddots&\vdots\\    
B_{0}^{r-2} &B_{1}^{r-2}&\cdots&B_{{r-2}}^{r-2}\\  
\end{pmatrix};
\end{align*}
Otherwise, $B_{r-1}$ does not commute with one of the remaining matrices, say $B_0$ (without loss of generality). Then we have
\begin{equation*}
B_{{r-1}}B_{t}=B_{t}B_{{r-1}}\quad\text{and}\quad (B_{0}-B_{{r-1}})^{-1}B_{t}=B_{t}(B_{0}-B_{{r-1}})^{-1}
\end{equation*}
for all $t\in [1: r-1)$ if $B_{0}-B_{{r-1}}$ is nonsingular. Thus,
\begin{align*}
  &  \operatorname{Inv}\begin{pmatrix}
B_{0}-B_{{r-1}} &B_{1}-B_{{r-1}}&\cdots&B_{{r-2}}-B_{{r-1}}\\ 
\vdots &\vdots&\ddots&\vdots\\    
(B_{0}-B_{{r-1}})B_{0}^{r-2} &(B_{1}-B_{{r-1}})B_{1}^{r-2}&\cdots&(B_{{r-2}}-B_{{r-1}})B_{{r-2}}^{r-2}    
\end{pmatrix}\\=&  \operatorname{Inv}\begin{pmatrix}
B_{0}-B_{{r-1}} &B_{1}-B_{{r-1}}&\cdots&B_{{r-2}}-B_{{r-1}}\\ 
\vdots &\vdots&\ddots&\vdots\\    
(B_{0}-B_{{r-1}})B_{0}^{r-2} &B_{1}^{r-2}(B_{1}-B_{{r-1}})&\cdots&B_{{r-2}}^{r-2}(B_{{r-2}}-B_{{r-1}})    
\end{pmatrix}\\= &\left(\prod_{t=1}^{r-2}\operatorname{Inv}(B_{t}-B_{{r-1}})\right) \times  \operatorname{Inv}\begin{pmatrix}
B_{0}-B_{{r-1}}&I&\cdots&I\\
(B_{0}-B_{{r-1}})B_{0} &B_{1}&\cdots&B_{{r-2}}\\ 
\vdots &\vdots&\ddots&\vdots\\    
(B_{0}-B_{{r-1}})B_{0}^{r-2} &B_{1}^{r-2}&\cdots&B_{{r-2}}^{r-2}
\end{pmatrix}\\
=&\left(\prod_{t=1}^{r-2}\operatorname{Inv}(B_{t}-B_{{r-1}})\right) \times   \operatorname{Inv}\begin{pmatrix}
I&(B_{0}-B_{{r-1}})^{-1}&\cdots&(B_{0}-B_{{r-1}})^{-1}\\
B_{0} &(B_{0}-B_{{r-1}})^{-1}B_{1}&\cdots&(B_{0}-B_{{r-1}})^{-1}B_{{r-2}}\\ 
\vdots &\vdots&\ddots&\vdots\\    
B_{0}^{r-2} &(B_{0}-B_{{r-1}})^{-1}B_{1}^{r-2}&\cdots&(B_{0}-B_{{r-1}})^{-1}B_{{r-2}}^{r-2}  
\end{pmatrix}\\
=&\left(\prod_{t=0}^{r-2}\operatorname{Inv}(B_{t}-B_{{r-1}})\right) \times   \operatorname{Inv}\begin{pmatrix}
I&I&\cdots&I\\
B_{0} &B_{1}&\cdots&B_{{r-2}}\\ 
\vdots &\vdots&\ddots&\vdots\\    
B_{0}^{r-2} &B_{1}^{r-2}&\cdots&B_{{r-2}}^{r-2}\\  
\end{pmatrix}.
\end{align*}
This confirms the claim. By induction, the matrix in \eqref{Eqn_blockVan} is nonsingular if {$B_{t}-B_{t'}$ is nonsingular for $0\le t<t'<r$}. This completes the proof.    
\end{proof}

\begin{Remark}
Lemma \ref{Lemma block matrix} relaxes the condition of Lemma 14 in \cite{Barg1}, which requires $B_0, B_1, \dots, B_{r-1}$ to be pairwise commutative.
\end{Remark}

\begin{Lemma}\label{lem-commute-matrix}
For the matrices $A_i$ defined in \eqref{Eqn_matrix_A}, we have 
\begin{itemize}
\item [i)] $A_iA_j=A_jA_i$ for $i,j\in [0:m)$ or $i,j\in [m:n)$ with $i\ne j$.
\item [ii)] $A_iA_j=A_jA_i$ for $i\in [0:m)$ and $j\in [m:n)$ such that $j\ne i+m$.
\end{itemize}
\end{Lemma}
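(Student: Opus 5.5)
The plan is to describe each $A_i$ explicitly by its action on the standard basis and then check commutativity coordinate by coordinate. From \eqref{Eqn_matrix_A}, for $i\in[0:m)$ the matrix $A_i$ is diagonal with $e_a A_i=\lambda_{i,a_i}e_a$. So the eigenvalue attached to $e_a$ depends only on the $i$-th $w$-ary digit of $a$.

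For $j=i'+m$ with $i'\in[0:m)$, the matrix $A_j$ is a generalized permutation matrix acting only on digit $i'$. Reading off the block rows gives $e_aA_j=\lambda_j e_{a'}$ when $a_{i'}=0$, where $a'$ agrees with $a$ except that $a'_{i'}=1$. Indeed, the first block row of $V_{j}A_j$ is indexed by $V_{i',0}$ and equals $\lambda_jV_{i',1}$, with rows sorted compatibly. When $a_{i'}=t\ge1$, we get $e_aA_j=e_{a''}$, where $a''$ agrees with $a$ except that $a''_{i'}=t+1 \bmod w$. I would first verify this row-matching: the $s$-th row of $V_{i',t}$ and the $s$-th row of $V_{i',t+1}$ differ only in digit $i'$, because the sorting by index preserves the order of the remaining digits. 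This is the example matrices $A_3,A_4,A_5$ in general form, and it is the only step requiring care. So $A_j$ is a cyclic shift of digit $i'$, scaled by $\lambda_j$ on one transition.

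For part i), two diagonal matrices trivially commute. For $j=i'+m$ and $l=i''+m$ with $i'\ne i''$, each matrix acts on a single distinct digit by a monomial map. The scalar for $A_j$ depends only on digit $i'$, which $A_l$ leaves unchanged, and symmetrically for $A_l$. Computing $e_aA_jA_l$ and $e_aA_lA_j$ therefore gives the same target index with the same scalar product, so the two matrices commute.

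For part ii), take $i\in[0:m)$ and $j=i'+m$ with $i'\ne i$. Then $e_aA_iA_j=\lambda_{i,a_i}\,e_aA_j$. Since $A_j$ does not change digit $i$, we also have $e_aA_jA_i=c\,e_{b}A_i=c\lambda_{i,b_i}e_{b}$ with $b_i=a_i$, where $c$ is the scalar from $A_j$. The two sides therefore agree on every basis row, which proves $A_iA_j=A_jA_i$.

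The main obstacle is only the bookkeeping that the block description in \eqref{Eqn_matrix_A} yields exactly this digit-wise monomial action. That reduces to the observation that the sorted rows of $V_{i',t}$ correspond to the remaining digits in lexicographic order. One could also note that when $j=i+m$ commutativity fails in general, which is why that case is excluded.
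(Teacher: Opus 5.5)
Your proof is correct and is essentially the paper's argument. The paper checks $A_iA_j=A_jA_i$ block-row by block-row on the sets $V_{i,j,u,v}$, which group the basis rows $e_a$ by the two relevant digits; you run the same computation row by row via the digit-wise monomial action, and you also prove the case $i,j\in[m:n)$ directly where the paper just cites Ye--Barg, making explicit the row-matching fact the paper uses implicitly in writing $V_{i,j,u,0}A_j=\lambda_jV_{i,j,u,1}$.
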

\begin{proof}
i) The statement follows from the proof of Theorems 15 and 16 in \cite{Barg1} for $i,j\in [m:n)$.

 For $i,j\in [0:m)$ with $i\ne j$, and for any $u,v\in [0: w)$, we have
\begin{equation*}
V_{i,j,u,v}A_iA_j=\lambda_{i,u}V_{i,j,u,v}A_j=\lambda_{i,u}\lambda_{j,v}V_{i,j,u,v}= \lambda_{j,v} V_{i,j,u,v} A_i=V_{i,j,u,v}A_jA_i,
\end{equation*}
which implies $A_iA_j=A_jA_i$.

ii) 
{
For $i\in [0:m)$ and $j\in [m:n)$ such that $j\ne i+m$, we have
\begin{equation*}
\left(\begin{array}{c}
     V_{i,j,0,0}  \\
     V_{i,j,1,0}\\
     \vdots\\
     V_{i,j,w-1,0}
\end{array}\right)A_iA_j=
 \left(\begin{array}{c}
    \lambda_{i,0}V_{i,j,0,0}  \\
  \lambda_{i,1}   V_{i,j,1,0}\\
     \vdots\\
  \lambda_{i,w-1}   V_{i,j,w-1,0}
\end{array}\right)A_j=
\lambda_j \left(\begin{array}{c}
   \lambda_{i,0}V_{i,j,0,1}  \\
  \lambda_{i,1}   V_{i,j,1,1}\\
     \vdots\\
  \lambda_{i,w-1}   V_{i,j,w-1,1}
\end{array}\right)  
\end{equation*}
and
\begin{equation*}
\left(\begin{array}{c}
     V_{i,j,0,0}  \\
     V_{i,j,1,0}\\
     \vdots\\
     V_{i,j,w-1,0}
\end{array}\right)A_jA_i=
\lambda_{j}\left(\begin{array}{c}
    V_{i,j,0,1}  \\
     V_{i,j,1,1}\\
     \vdots\\
     V_{i,j,w-1,1}
\end{array}\right)  A_i=
\lambda_{j} \left(\begin{array}{c}
   \lambda_{i,0} V_{i,j,0,1}  \\
    \lambda_{i,1} V_{i,j,1,1}\\
     \vdots\\
    \lambda_{i,w-1} V_{i,j,w-1,1}
\end{array}\right)  
\end{equation*}

{Let $\oplus$ denote the addition modulo $w$,} for $t\in [1:w)$, we have
\begin{equation*}
\left(\begin{array}{c}
     V_{i,j,0,t}  \\
     V_{i,j,1,t}\\
     \vdots\\
     V_{i,j,w-1,t}
\end{array}\right)A_iA_j=
 \left(\begin{array}{c}
    \lambda_{i,0}V_{i,j,0,t}  \\
  \lambda_{i,1}   V_{i,j,1,t}\\
     \vdots\\
  \lambda_{i,w-1}   V_{i,j,w-1,t}
\end{array}\right)A_j=
 \left(\begin{array}{c}
   \lambda_{i,0}V_{i,j,0,t\oplus1}  \\
  \lambda_{i,1}   V_{i,j,1,t\oplus1}\\
     \vdots\\
  \lambda_{i,w-1}   V_{i,j,w-1,t\oplus1}
\end{array}\right)  
\end{equation*}
and
\begin{equation*}
\left(\begin{array}{c}
     V_{i,j,0,t}  \\
     V_{i,j,1,t}\\
     \vdots\\
     V_{i,j,w-1,t}
\end{array}\right)A_jA_i=
\left(\begin{array}{c}
    V_{i,j,0,t\oplus1}  \\
     V_{i,j,1,t\oplus1}\\
     \vdots\\
     V_{i,j,w-1,t\oplus1}
\end{array}\right)  A_i=
 \left(\begin{array}{c}
   \lambda_{i,0} V_{i,j,0,t\oplus1}  \\
    \lambda_{i,1} V_{i,j,1,t\oplus1}\\
     \vdots\\
    \lambda_{i,w-1} V_{i,j,w-1,t\oplus1}
\end{array}\right),  
\end{equation*}
}

The above equations imply that 
\begin{equation*}
A_iA_j=A_jA_i    
\end{equation*}
for $i\in [0:m)$ and $j\in [m:n)$ such that $j\ne i+m$.
\end{proof}

{

\begin{Remark}
For completeness, we show here that $A_i$ does not commute with $A_{i+m}$ for $i\in [0: m)$.  By \eqref{Eqn_matrix_A}, we have
\begin{equation*}
V_{i,0}A_iA_{i+m}=\lambda_{i,0}V_{i,0}A_{i+m}=\lambda_{i,0}\lambda_{i+m} V_{i,1},   
\end{equation*}
and
\begin{equation*}
V_{i,0}A_{i+m}A_i=\lambda_{i+m}V_{i,1}A_i=\lambda_{i,1}\lambda_{i+m} V_{i,1},   
\end{equation*}
which shows that $A_iA_{i+m}\ne A_{i+m}A_i$ whenever $\lambda_{i,0}\ne \lambda_{i,1}$. As will be shown later that optimal repair bandwidth requires that $\lambda_{i,0} \lambda_{i,1},\dots, \lambda_{i,r-1}$ be pairwise distinct.
\end{Remark}
}

\begin{Theorem}\label{ThmMDS_condi}
The $(n,k)$ array code $\mathcal{C}$ defined in Construction \ref{Con1} is MDS over $\mathbf{F}_q$ if $A_i-A_j$ is nonsingular for any distinct $i,j\in [0:n)$.
\end{Theorem}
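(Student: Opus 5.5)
The MDS property of $\mathcal{C}$ is equivalent to the following: for every set of $r$ erased nodes $i_0<i_1<\cdots<i_{r-1}$ in $[0:n)$, the $rN\times rN$ submatrix $(A_{t,i_s})_{t\in[0:r),s\in[0:r)}$ of the parity-check matrix is nonsingular. Indeed, if all such submatrices are invertible, the contents of any $r$ nodes are uniquely determined by the other $k$ nodes through the parity-check equations \eqref{Eqn parity check eq}. By \eqref{Eqn_A_ti}, this submatrix is exactly the block Vandermonde matrix \eqref{Eqn_blockVan} with $B_s=A_{i_s}$. So the plan is to reduce the statement to Lemma \ref{Lemma block matrix} applied to $B_s=A_{i_s}$, $s\in[0:r)$.

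The hypothesis of the theorem directly gives that $B_s-B_{s'}=A_{i_s}-A_{i_{s'}}$ is nonsingular for $s\ne s'$. It remains to check the commutativity hypothesis of Lemma \ref{Lemma block matrix}: each $B_s$ must commute with at least $r-2$ of the other $r-1$ matrices. By Lemma \ref{lem-commute-matrix}, $A_i$ and $A_j$ commute for all distinct $i,j$, except possibly when $\{i,j\}=\{l,l+m\}$ for some $l\in[0:m)$. Thus each index $i\in[0:n)$ has at most one ``partner'' ($i+m$ if $i<m$, or $i-m$ if $i\ge m$) with which $A_i$ may fail to commute. Consequently, among the chosen $r$ matrices, each $B_s$ fails to commute with at most one other, namely its partner if that partner is also erased. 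This is exactly the hypothesis of Lemma \ref{Lemma block matrix}.

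The one delicate point is the induction inside the proof of Lemma \ref{Lemma block matrix}. After removing the last matrix, the remaining family must still satisfy the hypothesis. This holds automatically here, since the ``at most one non-commuting partner'' property is inherited by any subfamily of the $A_i$. I would note this explicitly, so the lemma's inductive reduction applies to every intermediate subset. With that, the block Vandermonde matrix is nonsingular for every choice of $r$ erased nodes, and $\mathcal{C}$ is MDS.
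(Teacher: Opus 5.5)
Your proposal is correct and follows the paper's proof: you reduce MDS to nonsingularity of each block Vandermonde submatrix with $B_s=A_{i_s}$, use Lemma~\ref{lem-commute-matrix} to see that the only non-commuting pairs are $\{l,l+m\}$, and then invoke Lemma~\ref{Lemma block matrix}. You also note explicitly that the ``at most one non-commuting partner'' condition passes to subfamilies, which makes the induction inside Lemma~\ref{Lemma block matrix} legitimate; the paper leaves this implicit, and it in fact holds for that lemma's hypothesis in general, not only for the $A_i$.
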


\begin{proof}
The code is MDS if
\begin{equation*}
\begin{pmatrix}
I&I&\cdots&I\\
A_{i_0} &A_{i_1}&\cdots&A_{i_{r-1}}\\ 
\vdots &\vdots&\ddots&\vdots\\    
A_{i_0}^{r-1} &A_{i_1}^{r-1}&\cdots&A_{i_{r-1}}^{r-1}\\    
\end{pmatrix}    
\end{equation*} is nonsingular for any $r$ integers $0\le i_0<i_1<\cdots<i_{r-1}<n$.   
By Lemma \ref{lem-commute-matrix}, each $A_{i_j}$ commutes with at least $r-2$ matrices in the set $\{A_{i_t}\mid t\in [0: r)\backslash\{j\}\}$ for any $j\in [0: r)$. Applying Lemma \ref{Lemma block matrix} yields the desired result. 
\end{proof}

\begin{Theorem}\label{Thm_diff_A}
For the code $\mathcal{C}$ in Construction \ref{Con1}, $A_i-A_j$ is nonsingular for any distinct $i,j\in [0:n)$ if
\begin{itemize}
\item [i)] $\lambda_{i,u}\ne \lambda_{j,v}$ for $i,j\in [0:m)$ with $i\ne j$, and $u,v\in [0: w)$;
\item  [ii)] $\lambda_{i,t}^w\ne \lambda_j$ for $i\in [0:m)$, $t\in [0: w)$, and $j\in [m: n)\backslash \{i+m\}$;
\item [iii)] $\prod_{t=0}^{w-1}\lambda_{i,t}\ne \lambda_{i+m}$ for $i\in [0: m)$;
\item [iv)] $\lambda_{i}\ne \lambda_{j}$ for $i, j\in [m: n)$ with $i\ne j$.
\end{itemize}
\end{Theorem}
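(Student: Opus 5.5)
The plan is a case analysis on whether $i$ and $j$ lie in $[0:m)$ or $[m:n)$. We use two structural facts about the $A_i$, both read off from \eqref{Eqn_matrix_A}.

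\begin{itemize}
\item[a)] For $i\in[0:m)$, $A_i$ is diagonal: its entry at $e_a$ is $\lambda_{i,a_i}$.
\item[b)] For $i\in[m:n)$, $A_i$ maps $V_{i,t}$ to $V_{i,t\oplus1}$, with the single weight $\lambda_i$ on the step $t=0\to1$ and weight $1$ on the other steps. Going once around the cycle multiplies by $\lambda_i$, so $A_i^w=\lambda_i I$.
\end{itemize}

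The tool for the commuting cases is the identity
\begin{equation*}
A_i^w-A_j^w=(A_i-A_j)\sum_{s=0}^{w-1}A_i^{s}A_j^{w-1-s},
\end{equation*}
which holds whenever $A_iA_j=A_jA_i$. If $A_i^w-A_j^w$ is nonsingular, then $A_i-A_j$ is nonsingular.

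\textbf{Case 1: $i,j\in[0:m)$.} $A_i-A_j$ is diagonal with entries $\lambda_{i,a_i}-\lambda_{j,a_j}$. These are nonzero by condition i).

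\textbf{Case 2: $i,j\in[m:n)$.} The two matrices commute by Lemma \ref{lem-commute-matrix} i), and by fact b) $A_i^w-A_j^w=(\lambda_i-\lambda_j)I$. This is nonsingular by condition iv), so the identity gives the claim.

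\textbf{Case 3: $i\in[0:m)$, $j\in[m:n)$, $j\ne i+m$.} The matrices commute by Lemma \ref{lem-commute-matrix} ii). Here $A_j^w=\lambda_jI$ and $A_i^w$ is diagonal with entries $\lambda_{i,a_i}^w$. So $A_i^w-A_j^w$ is diagonal with entries $\lambda_{i,a_i}^w-\lambda_j$, which are nonzero by condition ii), and again $A_i-A_j$ is nonsingular.

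\textbf{Case 4: $j=i+m$.} This is the main obstacle, since Lemma \ref{lem-commute-matrix} gives no commutativity here and the trick above is unavailable; we compute the determinant directly. Both $A_i$ and $A_{i+m}$ act only on the $i$-th digit of the index and fix all other digits (recall $V_{i+m,t}=V_{i,t}$). Hence, after permuting coordinates, $A_i-A_{i+m}$ is block diagonal with $w^{m-1}$ identical $w\times w$ blocks. There is one block for each choice of the other digits, spanned by $e_{a^{(0)}},\dots,e_{a^{(w-1)}}$, where $a^{(t)}$ has $i$-th digit $t$.

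Each block has the form $D-P$, where
\begin{itemize}
\item $D=\operatorname{diag}(\lambda_{i,0},\dots,\lambda_{i,w-1})$, and
\item $P$ is a weighted cyclic $w$-cycle permutation matrix whose nonzero weights multiply to $\lambda_{i+m}$ (weight $\lambda_{i+m}$ on one entry, weight $1$ on the others).
\end{itemize}
Expanding the determinant, only two permutations contribute: the identity and the full $w$-cycle. The full cycle contributes $(-1)^w\operatorname{sgn}(\text{$w$-cycle})\lambda_{i+m}=(-1)^w(-1)^{w-1}\lambda_{i+m}=-\lambda_{i+m}$. Therefore
\begin{equation*}
\det(D-P)=\prod_{t=0}^{w-1}\lambda_{i,t}-\lambda_{i+m}.
\end{equation*}
This is nonzero by condition iii), so every block, and hence $A_i-A_{i+m}$, is nonsingular.

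The care needed in Case 4 is to confirm from \eqref{Eqn_matrix_A}, which is written row-wise, that the block really is a single $w$-cycle with weight product $\lambda_{i+m}$, and to get the determinant sign right.
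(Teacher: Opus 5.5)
Your proof is correct. It agrees with the paper in two of the four cases and takes a different route in the other two.

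\textbf{Cases 1 and 4.} These coincide with the paper's parts i) and iii). The paper also reads off the diagonal entries $\lambda_{i,u}-\lambda_{j,v}$ on the blocks $V_{i,j,u,v}$. For $j=i+m$ it restricts $A_i-A_{i+m}$ to the rows $V_{i,0},\dots,V_{i,w-1}$ and obtains the same ``diagonal minus weighted $w$-cycle'' block. Your Leibniz expansion makes explicit the determinant $\prod_t\lambda_{i,t}-\lambda_{i+m}$, which the paper leaves implicit.

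\textbf{Cases 2 and 3.} Here you differ. For $i\in[0:m)$ and $j\in[m:n)\setminus\{i+m\}$, the paper again computes directly: on $V_{i,j,t,0},\dots,V_{i,j,t,w-1}$ it gets a block $\lambda_{i,t}I$ minus a weighted cycle, with determinant $\lambda_{i,t}^w-\lambda_j$. For $i,j\in[m:n)$ the paper simply defers to Ye--Barg. You instead combine three ingredients:
\begin{itemize}
\item the observation $A_j^w=\lambda_jI$;
\item the commutativity from Lemma \ref{lem-commute-matrix};
\item the factorization $A_i^w-A_j^w=(A_i-A_j)\sum_s A_i^sA_j^{w-1-s}$.
\end{itemize}
Both cases then reduce to a diagonal matrix, and no further determinant is needed.

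\textbf{Comparison.} Your route makes the $[m:n)$ case self-contained rather than a citation. It also explains why conditions ii) and iv) are conditions on $w$-th powers. The trade-offs are twofold. Your argument gives only sufficiency, whereas the paper's computations give ``if and only if'' in parts i), ii), iv); sufficiency is all the theorem needs. Your trick also depends on commutativity, so, as you note, it cannot handle $j=i+m$. The paper's block computation is uniform across all four cases and uses no commutativity at all.
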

\begin{proof}
\begin{itemize}
\item [i)] If $i,j \in [0: m)$ with $i\ne j$, for {$t\in [0: w)$}, we have that
\begin{align*}
\left(\begin{array}{c}
     V_{i,j,t,0}  \\
     V_{i,j,t,1}\\
     \vdots
\\
     V_{i,j,t,w-1}
\end{array}\right)(A_i-A_j)&= \lambda_{i,t}\left(\begin{array}{c}
     V_{i,j,t,0}  \\
     V_{i,j,t,1}\\
     \vdots
\\
     V_{i,j,t,w-1}
\end{array}\right)- \left(\begin{array}{c}
    \lambda_{j,0} V_{i,j,t,0}  \\
    \lambda_{j,1} V_{i,j,t,1}\\
     \vdots
\\
     \lambda_{j,w-1}V_{i,j,t,w-1}
\end{array}\right).
\end{align*}
By the definition of $V_{i,j, t,t'}$ where $t,t'\in [0: w)$, we clearly have that $A_i-A_j$ is nonsingular if and only if $\lambda_{i,u}\ne \lambda_{j,v}$ for $u,v\in [0: w)$.

    \item [ii)]
If $i\in [0: m)$ and $j\in [m: n)\backslash\{i+m\}$, for {$t\in [0: w)$}, we have
\begin{align*}
\left(\begin{array}{c}
     V_{i,j,t,0}  \\
     V_{i,j,t,1}\\
     \vdots
\\
     V_{i,j,t,w-1}
\end{array}\right)(A_i-A_j)&= \lambda_{i,t}\left(\begin{array}{c}
     V_{i,j,t,0}  \\
     V_{i,j,t,1}\\
     \vdots
\\
     V_{i,j,t,w-1}
\end{array}\right)- \left(\begin{array}{c}
    \lambda_j V_{i,j,t,1}  \\
     V_{i,j,t,2}\\
     \vdots\\
     V_{i,j,t,0}
\end{array}\right)\\&=\begin{pmatrix}
\lambda_{i,t}I&-\lambda_jI&&&\\
&\lambda_{i,t}I&-I&\\
&&\ddots&\ddots&\\
&&&\lambda_{i,t}I&-I\\
-I&&&&\lambda_{i,t}I
\end{pmatrix}  \left(\begin{array}{c}
     V_{i,j,t,0}  \\
     V_{i,j,t,1}\\
     \vdots\\
     V_{i,j,t,w-2}\\
V_{i,j,t,w-1}
\end{array}\right).
\end{align*}
By the definition of $V_{i,j, t,t'}$, we have that $A_i-A_j$ is nonsingular if and only if $\lambda_{i,t}^w\ne \lambda_j$.

  \item [iii)]
If $i\in [0: m)$ and $j=i+m$, we have
\begin{equation*}
\begin{pmatrix}
V_{i,0}\\
V_{i,1}\\
\vdots\\
V_{i,w-1}
\end{pmatrix}(A_{i}-A_{j})=\begin{pmatrix}
\lambda_{i,0}V_{i,0}\\
\lambda_{i,1}V_{i,1}\\
\vdots\\
\lambda_{i,w-1}V_{i,w-1}
\end{pmatrix}-\begin{pmatrix}
\lambda_{i+m}V_{i,1}\\
V_{i,2}\\
\vdots\\
V_{i,0}
\end{pmatrix}=\begin{pmatrix}
\lambda_{i,0}I&-\lambda_jI&&&\\
&\lambda_{i,1}I&-I&\\
&&\ddots&\ddots&\\
&&&\lambda_{i,w-2}I&-I\\
-I&&&&\lambda_{i,w-1}I
\end{pmatrix}\begin{pmatrix}
V_{i,0}\\
V_{i,1}\\
\vdots\\
V_{i,w-1}
\end{pmatrix},
\end{equation*}
which indicates that $A_i-A_j$ is nonsingular if
$\prod_{t=0}^{w-1}\lambda_{i,t}\ne \lambda_j$.

\item [iv)] If $i,j \in [m: n)$ with $i\ne j$, we have that $A_i-A_j$ is nonsingular if and only if $\lambda_{i}\ne \lambda_{j}$ according to the proof in \cite{Barg1}.
\end{itemize}    
\end{proof}

Combining Theorems \ref{ThmMDS_condi} and \ref{Thm_diff_A} yields Theorem \ref{Thm_MDS}.

\section{Repair Property of the Array Code in Construction \ref{Con1}: Proof of Theorem \ref{Thm_opt_RB}}\label{sec:repair}
\subsection{Proof of Theorem \ref{Thm_opt_RB} when $d=n-1$}
When $d=n-1$, i.e., $w=r$, it suffices to show that conditions R1 and R2 outlined in Section \ref{sec:repair_mech} hold for the code in Construction \ref{Con1}.
Define the selection matrices as
\begin{equation}\label{Eqn_S}
S_{i}\hspace{-1mm}=\hspace{-1mm}\left\{\hspace{-2mm}\begin{array}{ll} \xi_{i,0}V_{i,0}+\xi_{i,1}V_{i,1}+\cdots+\xi_{i,r-1}V_{i,r-1}, &\hspace{-2mm}i\in [0: m),\\[2pt] 
V_{i,0}, &\hspace{-2mm}i\in [m:2m), \end{array}\right.   
\end{equation} 
for some $\xi_{i,u}\in\mathbf{F}_q\backslash\{0\}$.

\begin{Proposition}\label{prop1}
When $w=r$, condition R1 holds for the code in Construction \ref{Con1} if $\lambda_{i,0}, \lambda_{i,1},\dots, \lambda_{i,r-1}$ are pairwise distinct.    
\end{Proposition}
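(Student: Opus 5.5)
The plan is to verify R1 directly, splitting into the two types of nodes $i\in[0:m)$ and $i\in[m:n)$, and in each case to write the stacked matrix in \eqref{repair_node_requirement1 d} as a product of nonsingular matrices and the permutation matrix $P_i=\left(V_{i,0}^\top,V_{i,1}^\top,\dots,V_{i,r-1}^\top\right)^\top$. By \eqref{Eqn_A_ti}, $A_{t,i}=A_i^t$, so the coefficient matrix is the stack of $S_iA_i^t$ for $t\in[0:r)$. Since $w=r$, the sets $V_{i,0},\dots,V_{i,r-1}$ partition the standard basis. Hence $P_i$ is an $N\times N$ permutation matrix, and each $V_{i,u}$ is a $\frac{N}{r}\times N$ block of it. For $i\in[m:n)$ we use the convention $V_{i,u}=V_{i-m,u}$ from \eqref{Eqn_Vt_more}.

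For $i\in[0:m)$, \eqref{Eqn_matrix_A} gives $V_{i,u}A_i=\lambda_{i,u}V_{i,u}$, and therefore $V_{i,u}A_i^t=\lambda_{i,u}^tV_{i,u}$. With $S_i$ as in \eqref{Eqn_S},
\begin{equation*}
S_iA_i^t=\sum_{u=0}^{r-1}\xi_{i,u}\lambda_{i,u}^tV_{i,u}.
\end{equation*}
Stacking over $t\in[0:r)$, the coefficient matrix factors as $(\Lambda_i\otimes I_{N/r})\,(D_i\otimes I_{N/r})\,P_i$. Here $\Lambda_i=(\lambda_{i,u}^t)_{t,u}$ is the $r\times r$ Vandermonde matrix on $\lambda_{i,0},\dots,\lambda_{i,r-1}$, and $D_i=\mathrm{diag}(\xi_{i,0},\dots,\xi_{i,r-1})$. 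The matrix $\Lambda_i$ is nonsingular exactly because the $\lambda_{i,u}$ are pairwise distinct, which is the hypothesis. The matrix $D_i$ is nonsingular since every $\xi_{i,u}\neq0$. So the product has rank $N$.

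For $i\in[m:n)$, take $S_i=V_{i,0}$. Iterating the cyclic rule in \eqref{Eqn_matrix_A} gives
\begin{equation*}
V_{i,0}A_i=\lambda_iV_{i,1},\qquad V_{i,0}A_i^t=\lambda_iV_{i,t}\quad\text{for } t\in[1:r).
\end{equation*}
This uses $V_{i,s}A_i=V_{i,s+1}$ for $1\le s\le r-2$, and no wrap-around to $V_{i,0}$ occurs before $t=r$. The stacked matrix is therefore $\mathrm{diag}(1,\lambda_i,\dots,\lambda_i)\otimes I_{N/r}$ times $P_i$. It has rank $N$ because $\lambda_i\neq0$; in this case no distinctness assumption is needed.

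Neither case is a real obstacle. The only points that need care are the bookkeeping that the blocks $V_{i,u}$ for $u\in[0:r)$ together form the permutation matrix $P_i$, which relies on $w=r$, and the check that the powers $A_i^t$ with $t<r$ never cycle back to $V_{i,0}$.
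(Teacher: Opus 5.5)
Your proof is correct and follows essentially the same route as the paper. For $i\in[0:m)$ the paper likewise writes the stacked matrix as the block Vandermonde matrix $(\lambda_{i,u}^t I_{N/r})_{t,u}$ times the stack of the blocks $\xi_{i,u}V_{i,u}$ (your $(D_i\otimes I_{N/r})P_i$), and for $i\in[m:n)$ it uses $S_iA_i^t=\lambda_iV_{i,t}$ for $t\in[1:r)$ exactly as you do; your Kronecker-product notation and the explicit no-wrap-around check simply spell out the same steps in more detail.
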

\begin{proof}
{For any $i\in [0:n)$} and $t\in [0:r)$, we have

\begin{align*} 
S_{i}A_{t,i} =&\left\{\begin{array}{ll} \xi_{i,0}\lambda_{i,0}^t V_{i,0}+\xi_{i,1}\lambda_{i,1}^t V_{i,1}+\cdots+\xi_{i,r-1}\lambda_{i,r-1}^t V_{i,r-1}, &i\in [0: m),\\[4pt] 
V_{i,0}, &i\in [m:2m) \text{ and }t=0,\\[4pt] 
\lambda_iV_{i,t}, &i\in [m:2m) \text{ and }t\in [1: r). \end{array}\right. \end{align*} 

Thus, for $i\in [0:m)$, we have
\begin{align*}
\operatorname{rank}\begin{pmatrix}
S_i\\
S_iA_{1,i}\\
\vdots\\
S_iA_{r-1,i}
\end{pmatrix}&=\operatorname{rank}\begin{pmatrix}
\xi_{i,0}V_{i,0}+\xi_{i,1}V_{i,1}+\cdots+\xi_{i,r-1} V_{i,r-1}\\
\xi_{i,0}\lambda_{i,0} V_{i,0}+\xi_{i,1}\lambda_{i,1} V_{i,1}+\cdots+\xi_{i,r-1}\lambda_{i,r-1} V_{i,r-1}\\
\vdots\\
\xi_{i,0}\lambda_{i,0}^{r-1} V_{i,0}+\xi_{i,1}\lambda_{i,1}^{r-1} V_{i,1}+\cdots+\xi_{i,r-1}\lambda_{i,r-1}^{r-1} V_{i,r-1}\\
\end{pmatrix}\\
&=\operatorname{rank}\begin{pmatrix}
I_{\frac{N}{r}}&I_{\frac{N}{r}}& \cdots &I_{\frac{N}{r}}\\
\lambda_{i,0}I_{\frac{N}{r}} &\lambda_{i,1}I_{\frac{N}{r}} &\cdots&\lambda_{i,r-1}I_{\frac{N}{r}}\\
\vdots&\vdots&\ddots&\vdots\\
\lambda_{i,0}^{r-1}I_{\frac{N}{r}}&\lambda_{i,1}^{r-1} I_{\frac{N}{r}}&\cdots&\lambda_{i,r-1}^{r-1}I_{\frac{N}{r}}\\
\end{pmatrix}\begin{pmatrix}
\xi_{i,0}V_{i,0}\\
\xi_{i,1}V_{i,1}\\
\vdots\\
\xi_{i,r-1}V_{i,r-1}
\end{pmatrix}\\
&=N
\end{align*}
since $\lambda_{i,0},\lambda_{i,1},\ldots,\lambda_{i,r-1}$ are pairwise distinct.

For $i\in [m:n)$, we have 
\begin{equation*}
\operatorname{rank}\begin{pmatrix}
S_i\\
S_iA_{1,i}\\
\vdots\\
S_iA_{r-1,i}
\end{pmatrix}=\operatorname{rank}\begin{pmatrix}
V_{i,0}\\
\lambda_{i}V_{i,1} \\
\vdots\\
\lambda_{i} V_{i,r-1}\\
\end{pmatrix}=N.
\end{equation*}    
\end{proof}

\begin{Proposition}\label{prop2}
When $d=n-1$, condition R2 holds for the code in Construction \ref{Con1} if { there exists $\mu_i \in \mathbf{F}_q$ such that $\lambda_{i+m}=\mu_i^w$ for $i\in [0: m)$.}
\end{Proposition}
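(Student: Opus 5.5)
The plan is to prove a stronger intertwining property. For each failed node $i$ and each helper $j\ne i$, I will show that
\[
S_iA_j=M_{i,j}S_i
\]
for some invertible $\frac{N}{r}\times\frac{N}{r}$ matrix $M_{i,j}$. Since $A_{t,j}=A_j^t$, this gives $S_iA_{t,j}=M_{i,j}^tS_i$ for all $t\in[0:r)$. Hence $\operatorname{rank}\begin{pmatrix}S_i\\ S_iA_{t,j}\end{pmatrix}=\operatorname{rank}(S_i)=N/r$, which is exactly R2 (recall $w=r$ here).

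The one place where I use the free coefficients $\xi_{i,u}$ in \eqref{Eqn_S} is to make this work in the non-commuting pair $j=i+m$; the hypothesis $\lambda_{i+m}=\mu_i^w$ is needed exactly there. The work is a case analysis on $(i,j)$, tracking how $A_j$ acts on the row vectors $e_a$.

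\emph{Failed node $i\in[0:m)$.} Index the rows of $S_i$ by the digit vector $b$ of $a$ with the $i$-th digit removed. Row $b$ is $\sum_u\xi_{i,u}e_{a(b,u)}$, where $a(b,u)$ has $a_i=u$ and the remaining digits given by $b$.

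If $j\in[0:m)\setminus\{i\}$, then $A_j$ is diagonal and scales $e_a$ by $\lambda_{j,a_j}$. Since $a_j$ is a digit of $b$, the whole row is scaled by one nonzero constant, so $M_{i,j}$ is diagonal.

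If $j\in[m:n)\setminus\{i+m\}$, then $A_j$ acts only on digit $j-m\ne i$. It sends $e_a$ to $e_{a'}$, where $a'$ has that digit cyclically shifted, with scalar $1$ or $\lambda_j$ depending only on that digit. This maps row $b$ to a nonzero multiple of row $b'$, so $M_{i,j}$ is a generalized permutation matrix.

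If $j=i+m$, then by \eqref{Eqn_matrix_A},
\[
S_iA_{i+m}=\xi_{i,0}\lambda_{i+m}V_{i,1}+\xi_{i,1}V_{i,2}+\cdots+\xi_{i,w-1}V_{i,0}.
\]
I want this to equal $\mu_iS_i$, which amounts to the system
\[
\xi_{i,w-1}=\mu_i\xi_{i,0},\qquad \lambda_{i+m}\xi_{i,0}=\mu_i\xi_{i,1},\qquad \xi_{i,u}=\mu_i\xi_{i,u+1}\ (1\le u\le w-2).
\]
Take $\xi_{i,0}=1$ and $\xi_{i,u}=\mu_i^{w-u}$ for $u\in[1:w)$. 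The middle equations hold identically, and the first holds because $\lambda_{i+m}=\mu_i^w$. Moreover $\mu_i\ne0$ since $\lambda_{i+m}\ne0$, so all $\xi_{i,u}$ are nonzero. Then $M_{i,i+m}=\mu_iI$.

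\emph{Failed node $i\in[m:n)$, with $S_i=V_{i,0}=V_{i-m,0}$.}

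For $j=i-m$, the matrix $A_j$ scales every row of $V_{i-m,0}$ by $\lambda_{j,0}$, so $M_{i,j}=\lambda_{j,0}I$.

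For $j\in[0:m)\setminus\{i-m\}$, the matrix $A_j$ is diagonal and depends on a digit other than $i-m$, so it scales rows individually.

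For $j\in[m:n)\setminus\{i\}$, the matrix $A_j$ shifts digit $j-m\ne i-m$ and therefore preserves the condition $a_{i-m}=0$. This is the Ye–Barg argument from \cite{Barg1}, and it gives $V_{i,0}A_j=PV_{i,0}$ with $P$ a generalized permutation matrix.

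The only nontrivial step is the case $(i,i+m)$, since $A_i$ and $A_{i+m}$ do not commute there. The key is to choose the $\xi_{i,u}$ as a geometric progression in $\mu_i$ so that $S_i$ becomes a left eigen-subspace of $A_{i+m}$. The remaining cases are routine bookkeeping with the digit partitions $V_{i,t}$.
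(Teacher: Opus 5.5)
Your proposal is correct and follows the paper's proof. It uses the same selection matrices, the same choice $\xi_{i,0}=1$, $\xi_{i,u}=\mu_i^{w-u}$, and the same case split; the only cosmetic difference is that you verify the one-step relation $S_iA_j=M_{i,j}S_i$ and iterate, whereas the paper computes $S_iA_j^t$ directly for each $t$.

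One small slip in your labeling: with this choice it is $\lambda_{i+m}\xi_{i,0}=\mu_i\xi_{i,1}$ that needs the hypothesis $\lambda_{i+m}=\mu_i^w$. The equation $\xi_{i,w-1}=\mu_i\xi_{i,0}$ and the recurrences $\xi_{i,u}=\mu_i\xi_{i,u+1}$ hold identically.
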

\begin{proof}
{Let $\mu_i$ be some element in $\mathbf{F}_q$ and $\lambda_{i+m}=\mu_i^w$, define \begin{equation}\label{Eqn_xi}
\xi_{i,u}=\left\{\begin{array}{ll} 1, &u=0,\\
\mu_i^{w-u}, &u\in [1: w).\end{array}\right.    
\end{equation}}
Note that $w=r$ according to $d=n-1$. We prove the statement according to the following four cases.
\begin{itemize}
 \item For $i,j\in [0: m)$ with $i\ne j$,
\begin{equation*}
\operatorname{rank}\begin{pmatrix}
S_i\\
S_iA_j^t
\end{pmatrix}
=\operatorname{rank}\begin{pmatrix}
\sum\limits_{u=0}^{r-1}\xi_{i,u}V_{i,u}\\
\sum\limits_{u=0}^{r-1}\xi_{i,u}V_{i,j,u,0}A_j^t\\
\sum\limits_{u=0}^{r-1}\xi_{i,u}V_{i,j,u,1}A_j^t\\
\vdots\\
\sum\limits_{u=0}^{r-1}\xi_{i,u}V_{i,j,u,r-1}A_j^t
\end{pmatrix}=\operatorname{rank}\begin{pmatrix}
\sum\limits_{u=0}^{r-1}\xi_{i,u}V_{i,u}\\
\lambda_{j,0}^t\sum\limits_{u=0}^{r-1}\xi_{i,u}V_{i,j,u,0}\\
\lambda_{j,1}^t\sum\limits_{u=0}^{r-1}\xi_{i,u}V_{i,j,u,1}\\
\vdots\\
\lambda_{j,r-1}^t\sum\limits_{u=0}^{r-1}\xi_{i,u}V_{i,j,u,r-1}
\end{pmatrix}=\operatorname{rank}(S_i).  
\end{equation*}

\item For $i\in [0: m)$ and $j\in [m: n)$, define
\begin{equation*}
\delta_u=\left\{\begin{array}{ll} 1, &u=1,\\
0, &\text{otherwise}.\end{array}\right. 
\end{equation*}
If $j\ne i+m$, then
\begin{align*}
\operatorname{rank}\begin{pmatrix}
S_i\\
S_iA_j^t
\end{pmatrix}
=&\operatorname{rank}\begin{pmatrix}
\sum\limits_{u=0}^{r-1}\xi_{i,u} V_{i,u}\\
\sum\limits_{u=0}^{r-1}\xi_{i,u} V_{i,u}A_j^t
\end{pmatrix}\\
=&\operatorname{rank}\begin{pmatrix}
\sum\limits_{u=0}^{r-1}\xi_{i,u} V_{i,u}\\
\sum\limits_{u=0}^{r-1}\xi_{i,u}V_{i,j,u,0}A_j^t\\
\sum\limits_{u=0}^{r-1}\xi_{i,u}V_{i,j,u,1}A_j^t\\
\vdots\\
\sum\limits_{u=0}^{r-1}\xi_{i,u}V_{i,j,u,r-1}A_j^t
\end{pmatrix} \\
=&\operatorname{rank}\begin{pmatrix}
\sum\limits_{u=0}^{r-1}\xi_{i,u} V_{i,u}\\
\lambda_j^{\left(\sum\limits_{v=1}^t\delta_{v}\right)}\sum\limits_{u=0}^{r-1}\xi_{i,u}V_{i,j,u,t}\\
\lambda_j^{\left(\sum\limits_{v=1}^t\delta_{1\oplus v}\right)}\sum\limits_{u=0}^{r-1}\xi_{i,u}V_{i,j,u,t\oplus1}\\
\vdots\\
\lambda_j^{\left(\sum\limits_{v=1}^t\delta_{(r-1)\oplus v}\right)}\sum\limits_{u=0}^{r-1}\xi_{i,u}V_{i,j,u,t\oplus (r-1)}
\end{pmatrix}\\
=&\operatorname{rank}(S_i)
\end{align*}
by \eqref{Eqn_matrix_A};
Otherwise,
we have
\begin{align*}
&\operatorname{rank}\begin{pmatrix}
S_i\\
S_iA_j^t
\end{pmatrix}\\
&=\operatorname{rank}\begin{pmatrix}
\sum\limits_{u=0}^{r-1}\xi_{i,u} V_{i,u}\\
\sum\limits_{u=0}^{r-1}\xi_{i,u} V_{j,u}A_j^t
\end{pmatrix}\\
&=\operatorname{rank}\begin{pmatrix}
\sum\limits_{u=0}^{r-1}\xi_{i,u}V_{i,u}\\
\sum\limits_{u=0}^{r-1}\xi_{i,u}\lambda_j^{\sum\limits_{v=1}^t\delta_{u\oplus v}}V_{j,u\oplus t}
\end{pmatrix}\\
&=\operatorname{rank}\begin{pmatrix}
\sum\limits_{u=0}^{r-1}\xi_{i,u}V_{i,u}\\
\xi_{i,0}\lambda_j^{\sum\limits_{v=1}^t\delta_{v}}V_{j,t}+\xi_{i,1}\lambda_j^{\sum\limits_{v=1}^t\delta_{1\oplus v}}V_{j,t\oplus 1}+\cdots+\xi_{i,r-1}\lambda_j^{\sum\limits_{v=1}^t\delta_{(r-1)\oplus v}}V_{j,t\oplus (r-1)}
\end{pmatrix}\\
&=\operatorname{rank}\begin{pmatrix}
\xi_{i,0}V_{i,0}+\cdots+\xi_{i,t-1}V_{i,t-1} +\xi_{i,t}V_{i,t} +\cdots +{\xi_{i,r-1}V_{i,r-1}}\\
\xi_{i,r-t}\lambda_j^{\sum\limits_{v=1}^t\delta_{(r-t)\oplus v}}V_{i,0} +\cdots+ \xi_{i,r-1}\lambda_j^{\sum\limits_{v=1}^t\delta_{(r-1)\oplus v}}V_{i,t-1}  + \xi_{i,0}\lambda_j^{\sum\limits_{v=1}^t\delta_{v}}V_{i,t}+\cdots +\xi_{i,r-t-1}\lambda_j^{\sum\limits_{v=1}^t\delta_{(r-t-1)\oplus v}}V_{i,r-1} \end{pmatrix}\\
&=\operatorname{rank}\begin{pmatrix}
\xi_{i,0}V_{i,0}+\xi_{i,1}V_{i,1}+\cdots+\xi_{i,t-1}V_{i,t-1} +\xi_{i,t}V_{i,t} +\xi_{i,t+1}V_{i,t+1} +\cdots +\xi_{i,r-1}V_{i,r-1}\\
\xi_{i,r-t}V_{i,0}+ \xi_{i,r-t+1}\lambda_jV_{i,1}+ \cdots +\xi_{i,r-1}\lambda_j V_{i,t-1} + \xi_{i,0}\lambda_jV_{i,t} +\xi_{i,1}V_{i,t+1}+\cdots + \xi_{i,r-t-1}V_{i,r-1}\end{pmatrix}\\
&=\operatorname{rank}(S_i)
\end{align*}
since
{
\begin{equation*}
\frac{\xi_{i,r-t}}{\xi_{i,0}}=\mu_{i}^{t}
,~
\frac{\xi_{i,r-t+u}\lambda_{j}}{\xi_{i,u}}=\frac{\mu_{i}^{t-u+r}}{\mu_{i}^{r-u}}=\mu_{i}^{t}\text{ for }u\in [1: t),
\end{equation*}
and
\begin{equation*}
\frac{\xi_{i,0}\lambda_j}{\xi_{i,t}}=\frac{\mu_{i}^r}{\mu_{i}^{r-t}}=\mu_{i}^{t}
,~
\frac{\xi_{i,u}}{\xi_{i,t+u}}=\frac{\mu_{i}^{r-u}}{\mu_{i}^{r-t-u}}=\mu_{i}^{t}\text{ for }u\in [1: r-t)
\end{equation*}}
according to \eqref{Eqn_xi}.

\item For $i,j\in [m: n)$ with $i\ne j$,
\begin{equation*}
\operatorname{rank}\begin{pmatrix}
S_i\\
S_iA_j^t
\end{pmatrix}
=\operatorname{rank}\begin{pmatrix}
V_{i,0}\\
V_{i,j,0,0}A_j^t\\
V_{i,j,0,1}A_j^t\\
\vdots\\
V_{i,j,0,r-1}A_j^t
\end{pmatrix}=\operatorname{rank}\begin{pmatrix}
V_{i,0}\\
\lambda_j^{\left(\sum\limits_{v=1}^t\delta_{v}\right)}V_{i,j,0,t}\\
\lambda_j^{\left(\sum\limits_{v=1}^t\delta_{1\oplus v}\right)}V_{i,j,0,t\oplus1}\\
\vdots\\
\lambda_j^{\left(\sum\limits_{v=1}^t\delta_{(r-1)\oplus v}\right)}V_{i,j,0,t\oplus (r-1)}
\end{pmatrix}=\operatorname{rank}(S_i).  
\end{equation*}

\item For $i\in [m: n)$ and $j\in [0: m)$, if $j=i-m$,
we have
\begin{equation*}
\operatorname{rank}\begin{pmatrix}
S_i\\
S_iA_j^t
\end{pmatrix}
=\operatorname{rank}\begin{pmatrix}
V_{i,0}\\
V_{j,0}A_j^t
\end{pmatrix}=\operatorname{rank}\begin{pmatrix}
V_{i,0}\\
\lambda_{j,0}^tV_{i,0}
\end{pmatrix}=\operatorname{rank}(S_i);  
\end{equation*}
else
\begin{equation*}
\operatorname{rank}\begin{pmatrix}
S_i\\
S_iA_j^t
\end{pmatrix}
=\operatorname{rank}\begin{pmatrix}
V_{i,0}\\
V_{i,j,0,0}A_j^t\\
V_{i,j,0,1}A_j^t\\
\vdots\\
V_{i,j,0,r-1}A_j^t
\end{pmatrix}=\operatorname{rank}\begin{pmatrix}
V_{i,0}\\
\lambda_{j,0}^tV_{i,j,0,0}\\
\lambda_{j,1}^tV_{i,j,0,1}\\
\vdots\\
\lambda_{j,r-1}^tV_{i,j,0,r-1}\\
\end{pmatrix}=\operatorname{rank}(S_i).  
\end{equation*}
\end{itemize}
\end{proof}

By Propositions \ref{prop1} and \ref{prop2}, Theorem \ref{Thm_opt_RB} follows for $d=n-1$.

\subsection{Proof of Theorem \ref{Thm_opt_RB} when $d<n-1$}
In this case, there are $r-w$ nodes that are not contacted during node repair given the repair degree $d=k+w-1<n-1$. We prove only the case $r-w=1$, i.e., $d=n-2$; the case $r-w>1$ follows similarly.

Without loss of generality, suppose node $0$ fails and node $n-1$ is not contacted. From \eqref{Eqn parity check eq} and \eqref{Eqn_A_ti}, we have
\begin{equation*}
\left\{\begin{array}{l}
\mathbf{f}_0+\mathbf{f}_1+\cdots+\mathbf{f}_{n-1}=\mathbf{0},\\
A_0\mathbf{f}_0+A_1\mathbf{f}_1+\cdots+A_{n-1}\mathbf{f}_{n-1}=\mathbf{0},\\
\hspace{2.5cm}\vdots\\
A_0^{r-1}\mathbf{f}_0+A_1^{r-1}\mathbf{f}_1+\cdots+A_{n-1}^{r-1}\mathbf{f}_{n-1}=\mathbf{0}.
\end{array}\right. 
\end{equation*}
{For $t=r-1, r-2, \ldots, 1$, multiply the $(t-1)$-th parity-check group by $A_{n-1}$ and subtract it from the $t$-th parity-check group to obtain}
\begin{equation*}
\left\{\begin{array}{l}
(A_0-A_{n-1})\mathbf{f}_0+(A_1-A_{n-1})\mathbf{f}_1+\cdots+(A_{n-2}-A_{n-1})\mathbf{f}_{n-2}=\mathbf{0},\\
(A_0-A_{n-1})A_0\mathbf{f}_0+(A_1-A_{n-1})A_1\mathbf{f}_1+\cdots+(A_{n-2}-A_{n-1})A_{n-2}\mathbf{f}_{n-2}=\mathbf{0},\\
\hspace{5cm}\vdots\\
(A_0-A_{n-1})A_0^{r-2}\mathbf{f}_0+(A_1-A_{n-1})A_1^{r-2}\mathbf{f}_1+\cdots+(A_{n-2}-A_{n-1})A_{n-2}^{r-2}\mathbf{f}_{n-2}=\mathbf{0}.
\end{array}\right. 
\end{equation*}
Since $A_{n-1}$ commutes with $A_j$ for $j\in [0: n-1)\backslash\{m-1\}$ by Lemma \ref{lem-commute-matrix}, we have
\begin{equation*}
\left\{\begin{array}{l}
\sum_{j=0,j\ne m-1}^{n-2}(A_j-A_{n-1})\mathbf{f}_j+(A_{m-1}-A_{n-1})\mathbf{f}_{m-1}=\mathbf{0},\\
\sum_{j=0,j\ne m-1}^{n-2}A_j(A_j-A_{n-1})\mathbf{f}_j+(A_{m-1}-A_{n-1})A_{m-1}\mathbf{f}_{m-1}=\mathbf{0},\\
\hspace{5cm}\vdots\\
\sum_{j=0,j\ne m-1}^{n-2}A_j^{r-2}(A_j-A_{n-1})\mathbf{f}_j+(A_{m-1}-A_{n-1})A_{m-1}^{r-2}\mathbf{f}_{m-1}=\mathbf{0}.
\end{array}\right. 
\end{equation*}
Multiplying each equation by $(A_{m-1}-A_{n-1})^{-1}$ from the left yields
\begin{equation*}
\left\{\begin{array}{l}
\sum_{j=0,j\ne m-1}^{n-2}(A_{m-1}-A_{n-1})^{-1}(A_j-A_{n-1})\mathbf{f}_j+\mathbf{f}_{m-1}=\mathbf{0},\\
\sum_{j=0,j\ne m-1}^{n-2}A_j(A_{m-1}-A_{n-1})^{-1}(A_j-A_{n-1})\mathbf{f}_j+A_{m-1}\mathbf{f}_{m-1}=\mathbf{0},\\
\hspace{5cm}\vdots\\
\sum_{j=0,j\ne m-1}^{n-2}A_j^{r-2}(A_{m-1}-A_{n-1})^{-1}(A_j-A_{n-1})\mathbf{f}_j+A_{m-1}^{r-2}\mathbf{f}_{m-1}=\mathbf{0},
\end{array}\right. 
\end{equation*}
where we used the commutativity of $A_{m-1}$ and $A_{n-1}$ with $A_j$ for $j\ne m-1$. 
That is,
\begin{align*}
\begin{pmatrix}
I&I&\cdots &I\\
A_0&A_1&\cdots &A_{n-2}\\
\vdots & \vdots & \ddots & \vdots \\
A_0^{r-2}&A_1^{r-2}&\cdots &A_{n-2}^{r-2}
\end{pmatrix}
\begin{pmatrix}
(A_{m-1}-A_{n-1})^{-1}(A_0-A_{n-1})\mathbf{f}_0\\
\vdots\\
(A_{m-1}-A_{n-1})^{-1}(A_{m-2}-A_{n-1})\mathbf{f}_{m-2}\\
\mathbf{f}_{m-1}\\
(A_{m-1}-A_{n-1})^{-1}(A_{m+1}-A_{n-1})\mathbf{f}_{m+1}\\
\vdots\\
(A_{m-1}-A_{n-1})^{-1}(A_{n-2}-A_{n-1})\mathbf{f}_{n-2}
\end{pmatrix}=\bf{0},
\end{align*}which defines an $(n'=n-1, k'=k)$ array code $\mathcal{C}'$ with parity-check matrix $(A_{t,i})_{t\in [0: r-1),i\in[0: n-1)}$ and codeword $(\mathbf{f}_0',\dots,\mathbf{f}_{n-2}')$ where
\begin{equation*}
\mathbf{f}_j'=\left\{\begin{array}{ll}
  (A_{m-1}-A_{n-1})^{-1}(A_j-A_{n-1})\mathbf{f}_j, & j\in [0: n-1)\backslash\{m-1\}, \\
   \mathbf{f}_j, & j=m-1.
\end{array}\right.
\end{equation*}
The code $\mathcal{C}'$ has repair degree $d'=d=n'-1$ and corresponds to a shortened version of an $(n, k+1)$ code from Construction \ref{Con1}.

Thus, repairing $(A_{m-1}-A_{n-1})^{-1}(A_0-A_{n-1})\mathbf{f}_0$ requires downloading $N/r'$ symbols from $\mathbf{f}_j'$ for $j\in [1: n-1)$, which is equivalent to repairing $\mathbf{f}_0$ by downloading $N/r'=N/w$ symbols from $\mathbf{f}_j$ for $j\in [1: n-1)$. The repair bandwidth is therefore
\begin{equation*}
 (n-2)\frac{N}{w}=\frac{dN}{d-k+1},
\end{equation*}
achieving the lower bound in \eqref{Eqn_gamma_opt}. This completes the proof of Theorem \ref{Thm_opt_RB} for $d<n-1$.

\section{Comparisons}\label{sec:comp}
In this section, we compare key parameters between the proposed MSR code and existing MSR codes that support $d<n-1$, as summarized in Table \ref{Table comp MSR}.

\begin{table*}[htbp]
\begin{center}
\caption{Comparison of key parameters of $(n,k)$ MSR codes supporting $d<n-1$ with sub-packetization level $N$; where existing codes are named using the initials of the authors' surnames.}
\label{Table comp MSR}
\setlength{\tabcolsep}{2.3pt}
\begin{tabular}{|c|c|c|c|c|c|}
\hline  
& $N$& Field size  & Repair degree  & Density &  References \\
\hline
 YB code 1    &$w^{n}$   &  $q\ge wn$  &  $d=k+w-1\in [k+1:n)$  & $\frac{1}{N}$ &  \cite{Barg1} \\ 
\hline
YB code 2      &$\left\{\hspace{-2mm}\begin{array}{ll}
w^{n-1}, &\hspace{-2mm}w=r\\
w^n, &\hspace{-2mm}w<r
\end{array} \right.$   &  $q>n$  &    $d=k+w-1\in [k+1:n)$ & $\frac{1}{N}$ & \cite{Barg1}\\
\hline
\multirow{2}{*}{VBK code}    &$w^{\lceil \frac{n}{w} \rceil}$   &\multirow{2}{*}{  $q\ge \left\{\begin{array}{ll}
6\lceil \frac{n}{2} \rceil+2, &{\rm if}~w=2\\
18\lceil \frac{n}{w} \rceil+2, &{\rm if}~w=3,4
\end{array} \right.$}  &   \multirow{2}{*}{$d=k+1, k+2, k+3$} & \multirow{2}{*}{$\frac{2N-1}{N^2}$}& \multirow{2}{*}{\cite{vajha2023small}}\\ 
&($w=2,3,4$)&&&&\\[4pt]
\hline
CB code     &$w^{n}$   &  $q>n+w$ & $d=k+w-1\in [k+1:n)$  &   $\frac{2N-1}{N^2}$&  \cite{chen2020explicit}\\
\hline
LLT code    &$w^{\lceil \frac{n}{2} \rceil}$  &$q>n+\lceil\frac{n}{2} \rceil w$  &  $d=k+w-1\in [k+1:n)$  & $\frac{2N-1}{N^2}$&   \cite{liu2022generic}\\
\hline
ZZ code  &$w^{\lceil \frac{n}{2} \rceil}$ &    $q\ge wn$   &  $d=k+w-1\in [k+1:n)$  & $\frac{3N-1}{2N^2}$&\cite{zhang2023vertical}  \\
\hline
LLTHBZ code   &$w^{\lceil \frac{n}{2} \rceil}$    &  $q>\left\{\hspace{-2mm}\begin{array}{ll}
4\lceil \frac{n}{2}\rceil, &{\rm if}~w=2\\
\lceil \frac{n}{2}\rceil (w+1), &{\rm if}~w\in [3: r)\\
\lceil \frac{n}{2}\rceil w, &{\rm if}~w=r
\end{array} \right.$  &  $d=k+w-1\in [k+1:n)$  &$\frac{3N-1}{2N^2}$& \cite{li2023MDS}\\
\hline
WLHY code  &$w^{\lceil \frac{n}{3} \rceil}$    &  $q>2n$  &  $d=k+1$  & $\frac{4}{3N}$& \cite{wang2023constructing} \\
\hline
LWHY code 1     &$w^{\lceil \frac{n}{w} \rceil}$    &  $q>\left\{\hspace{-2mm}\begin{array}{ll}
nw+(w-1)2^{w-2}, &{\rm implicit~construction}\\
\max\{nw,p^{w^{5}}\}, &{\rm explicit~construction}
\end{array} \right.$  &  $d=k+w-1\in [k+1:n)$ & $\frac{2N-1}{N^2}$ & \cite{li2024msr}\\\hline
LWHY code 2    &$w^{\lceil \frac{n}{w+1} \rceil}$    &  $q>\left\{\hspace{-2mm}\begin{array}{ll}
nw+w2^{w-1}, &{\rm implicit~construction}\\
\max\{nw,p^{w^{5}}\}, &{\rm explicit~construction}
\end{array} \right.$  &  $d=k+w-1\in [k+1:n)$ &$\frac{(2w-1)N-w+1}{wN^2}$ & \cite{li2024msr}\\\hline
New MSR code    &$w^{\lceil \frac{n}{2} \rceil}$    &  $q>\lceil \frac{n}{2}\rceil w$ {$ {\rm with} \gcd(w,q-1) = 1$ } &  $d=k+w-1\in [k+1:n)$ &$\frac{1}{N}$  & Constr. \ref{Con1}\\
\hline
\end{tabular}
\end{center}
\end{table*}

From Table \ref{Table comp MSR}, {we observe that the new MSR code requires the field size $q$ to satisfy $\gcd(w,q-1) = 1$, whereas none of the other MSR codes listed in Table II imposes this condition. This restriction narrows the range of parameters for which the proposed code has a field-size advantage over the YB code 1 \cite{Barg1}, VBK code \cite{vajha2023small}, LLT code \cite{liu2022generic}, and the ZZ code \cite{zhang2023vertical}.} The detailed  advantages of the new MSR code are as follows:
\begin{itemize}
\item [i)] It achieves a significantly smaller sub-packetization level than the YB codes 1, 2 in \cite{Barg1} and the CB code in \cite{chen2020explicit}, and requires a smaller finite field than YB code 1 {for some code parameters $n$ and $w$, namely those for which there exists a prime power $q$ such that $\lceil \frac{n}{2}\rceil w<q<wn$ and $\gcd(w,q-1) = 1$}.
\item [ii)] It supports any repair degree $d\in [k+1 : n)$, offering greater flexibility than the VBK code in \cite{vajha2023small} (restricted to $d=k+1, k+2, k+3$) and the WLHY code in \cite{wang2023constructing} (restricted to $d=k+1$). Moreover, it requires a much smaller finite field than the VBK code when $w=2$ (i.e., under the same sub-packetization level) { for some code length $n$, namely that for which there exists a prime power $q$ such that $\lceil \frac{n}{2}\rceil w<q<6\lceil \frac{n}{2} \rceil+2$ and $\gcd(w,q-1) = 1$}.
\item [iii)] While it shares the same sub-packetization level as the LLT code in \cite{liu2022generic} and the ZZ code in \cite{zhang2023vertical}, it demands a smaller finite field { for some code parameters $n$ and $w$, namely those for which there exists a prime power $q$ such that $\lceil \frac{n}{2}\rceil w<q<wn$ and $\gcd(w,q-1) = 1$}.
\item [iv)] It subsumes the YB codes in \cite{Barg1}, which can be obtained by shortening the new code.
\item [v)] Although the recent LWHY codes 1 and 2 in \cite{li2024msr} have smaller sub-packetization levels, their explicit versions generally require much larger finite fields (e.g., $p^{w^5}$) compared to $nw$ for a wide range of parameters.
\item [vi)] The new MSR code achieves the smallest density among all existing MSR codes listed in Table \ref{Table comp MSR}, matching that of the Ye–Barg codes. This sparsity directly translates to reduced computational overhead in encoding and decoding, enhancing practical feasibility.
\end{itemize}

\section{Conclusion}\label{sec:conclusion}
We proposed an $(n, k)$ MSR code constructed from a block Vandermonde parity-check matrix that employs diagonal matrices together with generalized permutation matrices as building blocks. The proposed MSR code achieves a small sub-packetization level of $(d-k+1)^{\lceil \frac{n}{2}\rceil}$ and supports any repair degree $d\in [k+1: n)$. The resulting parity-check matrix is sparse, which directly translates to reduced computational overhead during encoding and decoding, enhancing practical feasibility. This work not only broadens the family of Ye–Barg codes but also introduces greater flexibility into their design framework. By relaxing structural constraints while preserving the MDS property, it opens new avenues for constructing MSR codes for future large-scale storage systems.


\begin{thebibliography}{99}

\bibitem{dimakis2010network} A. G. Dimakis, P. Godfrey, Y. Wu, M. Wainwright, and K. Ramchandran, ``Network coding for distributed storage systems," \textit{IEEE Trans. Inform. Theory,} vol. 56, no. 9, pp. 4539-4551, Sep. 2010.

\bibitem{Rashmi2011optimal} K. V. Rashmi, N. B. Shah, and P. V. Kumar, ``Optimal exact-regenerating codes for distributed storage at the {MSR} and {MBR} points via a product-matrix construction," \textit{IEEE Trans. Inform. Theory,} vol. 57, no. 8, pp. 5227-5239, Aug. 2011.

\bibitem{wang2012long}  Z. Wang, T. Tamo, and J. Bruck, ``Long MDS Codes for Optimal Repair
Bandwidth," \textit{2012 IEEE International Symposium on Information Theory (ISIT)}, pp. 1182–1186, IEEE, 2012.

\bibitem{Zigzag} T. Tamo, Z. Wang, and J. Bruck, ``Zigzag codes: MDS array codes
with optimal rebuilding," \textit{IEEE Trans. Inform. Theory,} vol. 59, no. 3, pp. 1597-1616, Mar. 2013.

\bibitem{hadamard} D.S. Papailiopoulos, A.G. Dimakis, and V.R. Cadambe, ``Repair optimal erasure codes through hadamard designs," \textit{IEEE Trans. Inform. Theory,} vol. 59, no. 5, pp. 3021-3037, May 2013.

\bibitem{Hadamard strategy} X. Tang, B. Yang, J. Li, and H.D.L. Hollmann, ``A new repair strategy for the Hadamard minimum storage regenerating codes for distributed storage systems,'' \textit{IEEE Trans. Inform. Theory,} vol. 61, no. 10, pp. 5271-5279, Oct. 2015.

\bibitem{invariant subspace} J. Li, X. Tang, and U. Parampalli, ``A framework of constructions of minimal storage regenerating codes
with the optimal access/update property," \textit{IEEE Trans. Inform. Theory,} vol. 61, no. 4, pp. 1920-1932, Apr. 2015.


\bibitem{Long IT} Z. Wang, T. Tamo, and J. Bruck, ``Explicit minimum storage regenerating codes," \textit{IEEE Trans. Inform. Theory,} vol. 62, no. 8, pp. 4466-4480, Aug. 2016.

\bibitem{tian2014characterizing} C. Tian, ``Characterizing the rate region of the (4, 3, 3) exact-repair regenerating codes," \textit{IEEE J. Sel. Areas Commun.,} vol. 32, no. 5, pp. 967-975, Dec. 2014.



\bibitem{han2015update} Y. S. Han, H.-T. Pai, R. Zheng, P. K. Varshney ``Update-efficient error-correcting product-matrix codes," \textit{IEEE Trans. Commun.,} vol. 63, no. 6, pp. 1925-1938, June 2015.




\bibitem{Sasidharan-Kumar2} B. Sasidharan, M. Vajha, and P.V. Kumar, ``An explicit, coupled-layer construction of a high-rate MSR code with low sub-packetization level, small field size and all-node repair," \textit{ arXiv: 1607.07335 [cs.IT]}

\bibitem{Goparaju} S. Goparaju, A. Fazeli, and A. Vardy, ``Minimum storage regenerating codes
for all parameters," \textit{IEEE Trans. Inform. Theory,} vol. 63, no. 10, pp. 6318-6328, Oct. 2017.

\bibitem{Barg1} M. Ye and A. Barg, ``Explicit constructions of high-rate MDS array codes with optimal repair bandwidth," \textit{IEEE Trans. Inform. Theory,} vol. 63, no. 4, pp. 2001-2014, Apr. 2017.

\bibitem{Barg2} M. Ye and A. Barg, ``Explicit constructions of optimal-access MDS codes with nearly optimal sub-packetization," \textit{IEEE Trans. Inform. Theory,} vol. 63, no. 10, pp. 6307-6317, Oct. 2017.

\bibitem{li2017generic} J. Li, X. Tang, and C. Tian, ``A generic transformation for optimal repair bandwidth and rebuilding access in MDS codes," in \textit{Proc. IEEE Int. Symp. Inf. Theory (ISIT)}, June 2017,
pp. 1623-1627.

\bibitem{li2018generic} J. Li, X. Tang, and C. Tian, ``A generic transformation to enable optimal repair in MDS codes for distributed storage systems," \textit{IEEE Trans. Inform. Theory,} vol. 64, no. 9, pp. 6257-6267, Sept. 2018.

\bibitem{hou2020binary} H. Hou and P.P.C. Lee, ``Binary MDS array codes with optimal repair," \textit{IEEE Trans. Inform. Theory,} vol. 66, no. 3, pp. 1405--1422, Mar. 2020.

\bibitem{elyasi2020cascade} M. Elyasi and S. Mohajer, ``Cascade codes for distributed storage systems," \textit{IEEE Trans. Inform. Theory,} vol. 66, no. 12, pp. 7490--7527, Dec. 2020.

\bibitem{chen2020explicit} Z. Chen and A. Barg, ``Explicit constructions of MSR codes for clustered distributed storage: The rack-aware storage model," \textit{IEEE Trans. Inform. Theory,} vol. 66, no. 2, pp. 886--899, Feb. 2020.


\bibitem{li2022generic} J. Li, X. Tang, and C. Hollanti, ``A generic transformation for optimal node repair in MDS array codes over $F_2$," \textit{IEEE Trans. Commun.,} vol. 70, no. 2, pp. 727-738, Feb. 2022.

\bibitem{liu2022generic} Y. Liu, J. Li, and X. Tang, ``A generic transformation to generate MDS array codes with $\delta$-optimal access property," \textit{IEEE Trans. Commun.,} vol. 70, no. 2, pp. 759-768, Feb. 2022.

\bibitem{vajha2023small}
M.~Vajha, S.~Balaji, and P.~V. Kumar, ``Small-d MSR codes with optimal access,
optimal sub-packetization, and linear field size,'' \textit{IEEE Trans. Inform. Theory,} vol. 69, no. 7, pp. 4303-4332, Jul. 2023.

\bibitem{zhang2023vertical} Z. Zhang and L. Zhou, ``A vertical-horizontal framework for building rack-aware regenerating codes," \textit{IEEE Trans. Inform. Theory,} vol. 69, no. 5, pp. 2874--2885, May 2023.

\bibitem{li2023MDS}
J.~Li, Y.~Liu, X.~Tang, Y. S.~Han, B.~Bai, and G.~Zhang, ``M{DS} array codes with (near) optimal repair
bandwidth for all admissible repair degrees,'' \textit{IEEE Trans. Commun.,} vol. 71, no. 10, pp. 5633-5646, Oct. 2023.


\bibitem{wang2023constructing} N. Wang, G. Li, S. Hu, and M. Ye, ``Constructing MSR codes with subpacketization $2^{n/3}$ for $k + 1$ helper nodes,"  \textit{IEEE Trans. Inform. Theory,} vol. 69, no. 6, pp. 3775--3792, June 2023.

\bibitem{li2024msr} G. Li, N. Wang, S. Hu, and M. Ye, ``MSR codes with linear field size and smallest sub-packetization for any number of helper nodes,"  \textit{IEEE Trans. Inform. Theory,} vol. 70, no. 11, pp. 7790--7806, Nov. 2024.



\bibitem{liu2023generic} Y.~Liu, J.~Li, and X.~Tang, ``A generic transformation to enable optimal repair/access MDS array codes with multiple repair degrees," \textit{IEEE Trans. Inform. Theory,} vol. 69, no. 7, pp. 4407-4428, Jul. 2023.


\bibitem{wu2021achievable} T.-Y. Wu, Y. S. Han, Z. Li, B. Bai, G. Zhang, X. Zhang, and X. Wu, ``Achievable lower bound on the optimal access bandwidth of $(k+ 2, k, 2)$-MDS
array code with degraded read friendly," \textit{in 2021 IEEE Information Theory Workshop (ITW)}, pp. 1–5, IEEE, 2021.

\bibitem{ye2020new} M. Ye, ``New constructions of cooperative MSR codes: Reducing node size to exp(O(n))", \textit{IEEE Trans. Inform. Theory,} vol.66, no.12, pp.7457-7464, 2020.

\bibitem{liu2023new} Y. Liu, H. Cai, and
X. Tang, ``A new cooperative repair scheme with $k + 1$ helper nodes for $(n, k)$ Hadamard MSR codes with small sub-packetization", \textit{IEEE Trans. Inform. Theory,} vol. 69, no. 5, pp. 2820-2829, May 2023.

\bibitem{eMSR_d_eq_n-1} J. Li, Y. Liu, and X. Tang, ``A systematic construction of MDS codes with small sub-packetization level and near-optimal repair bandwidth," \textit{IEEE Trans. Inform. Theory,} vol. 67, no. 4, pp. 2162-2180, Apr. 2021.

\bibitem{li2024constructions} L. Li, X. Yu, L. Chen, Y. Dong, and Y. Luo, ``Constructions of binary MDS array codes with optimal repair/access bandwidth", \textit{IEEE Trans. Commun.,} vol.72, no. 6, pp.3113-3125, June 2024.

\bibitem{wang2023rack} J. Wang, D. Zheng, S. Li, and X. Tang, ``Rack-aware MSR codes with error correction capability for multiple erasure tolerance", \textit{IEEE Trans. Inform. Theory,} vol. 69, no. 10, pp. 6428-6442, Oct. 2023. 

\end{thebibliography}
\end{document}